\documentclass[runningheads]{llncs}
\usepackage[T1]{fontenc}
\usepackage{graphicx}
\usepackage{natbib}
\usepackage[utf8]{inputenc} 
\usepackage{hyperref}       
\usepackage{url}            
\usepackage{booktabs}       
\usepackage{amsfonts}       
\usepackage{tikz}           
\usepackage{nicefrac}       
\usepackage{microtype}      
\usepackage{xcolor}         

\usepackage{graphicx}
\usepackage{subfigure}
\usepackage{subcaption}

\usepackage{amsmath}
\usepackage{amssymb}
\usepackage{mathtools}

\usepackage[capitalize,noabbrev]{cleveref}

\usepackage{pgfplots}
\usepackage{bbm}
\pgfplotsset{compat=1.18}
\usepackage{float}

\spnewtheorem{thm}{Theorem}{\bfseries}{\itshape}
\spnewtheorem{cor}[thm]{Corollary}{\bfseries}{\itshape}
\spnewtheorem{lem}[thm]{Lemma}{\bfseries}{\itshape}
\spnewtheorem{prop}[thm]{Proposition}{\bfseries}{\itshape}
\spnewtheorem{conj}[thm]{Conjecture}{\bfseries}{\itshape}
\spnewtheorem{obs}[thm]{Observation}{\bfseries}{\itshape}
\spnewtheorem{clm}[thm]{Claim}{\bfseries}{\itshape}

\spnewtheorem{algo}{Algorithm}{\bfseries}{\itshape}

\spnewtheorem{df}[thm]{Definition}{\bfseries}{\rmfamily}
\spnewtheorem{eg}[thm]{Example}{\bfseries}{\rmfamily}
\spnewtheorem{asm}[thm]{Assumption}{\bfseries}{\rmfamily}
\spnewtheorem{cond}[thm]{Condition}{\bfseries}{\rmfamily}

\spnewtheorem{rmk}[thm]{Remark}{\itshape}{\rmfamily}

\newcommand{\NEG}{\mathrm{pure NE}(G)}
\newcommand{\R}{\mathbb{R}}
\renewcommand{\t}{\mathbf{t}}
\newcommand{\x}{\mathbf{x}}
\newcommand{\X}{\mathcal{X}}
\newcommand{\y}{\mathbf{y}}
\newcommand{\z}{\mathbf{z}}
\newcommand{\argmin}{\mathop{\mathrm{argmin}}}
\newcommand{\argmax}{\mathop{\mathrm{argmax}}}

\allowdisplaybreaks

\begin{document}

\title{Pure Nash Equilibria under the Affine Mechanism: A Potential Game of Exaggeration}
%
\titlerunning{Pure Nash Equilibria under the Affine Mechanism}
%
\author{Jason Jisen Li \and Young Wu \and Yancheng Zhu \and Jin-yi Cai \and Xiaojin Zhu}
\authorrunning{Jason Jisen Li et al.}
%
\institute{University of Wisconsin - Madison}
%
\maketitle              
\begin{abstract}
The mean mechanism is known to be non-incentive-compatible, namely, rational players are incentivized to misreport their values.
Despite this game-theoretic issue, the mean mechanism is prevalent in practice due to its other desirable properties.
We give a full characterization of pure Nash equilibria--how the players will misreport--for the affine mechanism, of which the mean is a special case.
Furthermore, we characterize both complete-information and Bayesian games under the affine mechanism.
Our results highlight the inevitability of extreme exaggeration in such games.


\keywords{Game Theory}
\end{abstract}

\section{Introduction} 
Consider the classic numerical aggregation problem: $n$ players each report a number $x_1 \ldots x_n$. A mechanism aggregates these into a single number $\hat x$, e.g., using the mean or the median function. Player $i$'s utility depends on how close $x_i$ and $\hat x$ are (assuming the reported $x_i$ is truthful).
There has been a sharp divide in theory and practice:
\begin{itemize}
\item In theory, it is well understood that on single-peaked preference domains, the strategyproof rules are exactly the generalized median (phantom-voter) aggregation mechanisms~(\citep{moulin1980strategy}). 
In contrast, the mean mechanism is the textbook example of non-incentive-compatibility: players will have an incentive to misreport their numbers (\cite{procaccia2013approximate,chan2021mechanism}).

\item In practice, the mean remains the dominant aggregation rule when participants submit numerical reports.
Corporate and academic performance evaluations such as paper reviews or student evaluations of teaching are almost universally aggregated with the mean.
In congressional committee legislative voting over continuous variables (e.g., deciding the specific dollar amount for a corporate tax rate), final compromises are routinely arrived at via split-the-difference arithmetic averages (\cite{martin2001congressional}).
Many expert judgment procedures, such as engineering risk assessment or technology forecasting, aggregate experts by equal-weight averaging (\cite{cooke1991experts}). 
\end{itemize}

Using the mean is often not out of ignorance: it is a conscious choice to trade strategyproofness for better estimates under the L2 utilitarian social welfare function (\cite{gershkov2017optimal}),
for fairness when the policymaker believes extreme outliers deserve a voice even if it invites cheating (\cite{maskin2008mechanism}), 
and for interpretability, where explaining a median mechanism with phantom agents to the public is politically unpalatable (\cite{walshequitable}). 
When the stake is high, often practitioners patch up the mean mechanism via trimmed mean (as in skating judging) or weighted average, rather than completely abandoning the mean mechanism for the median.
Therefore, despite being non-strategyproof, the mean mechanism is practiced ubiquitously. 

While it was well-known that strategic players \textbf{will} misreport under the mean mechanism, it was less clear \textbf{how} they misreport.
A better understanding is important for detecting strategic behaviors, and for making an informed decision when choosing a mechanism to balance welfare efficiency and control cheating.
When players' strategic misreporting is viewed as an attempt to ``game the system'', it also has security implications.
This paper gives a full characterization of how the players rationally misreport under affine mechanisms, including the mean mechanism.
Our main contributions are:
\begin{enumerate}
\item We prove that pure Nash Equilibria (pure NEs) always exist under the mean mechanism.  This is done via identifying its potential function.
\item In each pure NE, all but at most one player must exaggerate their misreporting to the maximum extent possible. 
\item In decomposable high-dimensional action spaces, the pure NEs are also decomposable into individual 1D problems and admit closed-form solutions.
\item When players do not know other players' types, we provide a similar Bayesian NE characterization.
\end{enumerate}

\section {Related Work}

Our work is closely related to the facility location problem. \cite{moulin1980strategy} and \cite{procaccia2013approximate} introduce the mechanism design problem for facility location, where the mechanism designer selects a facility location as a function of the players' reported preferences in a way that rational players will truthfully report their preferences. In particular, they note that direct averaging of players' reports will incentivize them to misreport. \cite{berga2002single} extends the problem to a multidimensional domain similar to our setting, but it focuses on an incentive-compatible mechanism called the generalized median voter scheme. 
Our paper is not to advocate for the mean mechanism. 
But given its practical relevance, we focus on characterizing its pure NEs.
\cite{schummer2002strategy, filimonov2023strategyproof} study the facility location problem on graphs and trees; \cite{goel2020optimality, zampetakis2023bayesian, barak2024mac} study the approximation mechanisms in multidimensional Euclidean spaces, which use similar techniques and are related to our reduction of the multidimensional problem into multiple single-dimensional ones, but they also focus on decomposition of the generalized median type mechanisms instead of the mean mechanism.

An additional justification for the mean mechanism stems from a decision-theoretic perspective: the optimality of the Bayes estimator for the posterior mean. Under quadratic loss and Gaussian distributional assumptions, weighted averaging is not only commonly used but also optimal. This is empirically studied and commonly assumed in the social science literature, including models by \cite{degroot1991optimal, romeijn2024interpretation}, which argue that averaging, or what they call linear opinion pooling, is a form of optimal Bayesian updating based on other people's opinions, assuming truthfulness. They interpret the weights in the averaging process as trust coefficients, which our model also uses. In the economics and forecasting literature, simple averaging often performs better than more complex methods across various economic forecasting tasks, as summarized in \cite{clemen1989combining, wang2023forecast}.

One interesting line of research observed human exaggeration behaviors in the context of political polarization and media bias, including \cite{jost2022cognitive, kubin2021role, van2021social}. Although these studies do not directly link the behavior to a simple mean mechanism, they provide empirical evidence of exaggeration as an equilibrium behavior, and the non-incentive-compatibility of humans' opinion aggregation mechanisms could be an important contributing factor. 

\section{Problem Setup}
We consider the affine mechanism in an $n$-player simultaneous-move game.
The players have a common continuous action space $\X \subset \R^d$ which is compact and convex.
Let $\x_i \in \X$ be the action of the $i$th player for $i \in [n] \coloneqq \{1, 2, ..., n\}$.
A joint action, or pure strategy profile, $\x=(\x_1, \ldots, \x_n)$ denotes the simultaneous action choice of all players.
As in standard literature, we also write $\x=(\x_i, \x_{-i})$ when we want to emphasize player $i$.

The mean mechanism aggregates to various degrees the inputs it receives from all players. 
In this paper, we consider a slight generalization of the mean, namely the affine mechanism of the form
\begin{equation}
\label{eq:receiver}
\hat\x := w_0 \x_0 + \sum_{i=1}^n w_i \x_i,
\end{equation}
where $w_i > 0, i \in [n]$ is a real-valued (not necessarily normalized) weight that signifies how much influence player $i$ has on the mechanism.
$\x_0 \in \X$ is a bias term that, together with $w_0\in \R$, denotes a fixed, constant ``background'' influence that is beyond the control of the $n$ players.
The affine mechanism~\eqref{eq:receiver} is common knowledge to all players.

The $n$ players each have a loss (negative utility) function $\ell_i$ defined by a target $\t_i \in \R^d$.
Note the targets are not restricted to $\X$.
The goal of player $i$ is to drive the mechanism's $\hat\x$ close to $\t_i$.
We consider the squared 2-norm loss:
\begin{equation}
\label{eq:ell}
\ell_i(\x) := \|\hat\x - \t_i\|_2^2 = \|w_0 \x_0 + \sum_{j=1}^n w_j \x_j - \t_i\|_2^2.
\end{equation}
As rational agents, the players want to selfishly minimize their own $\ell_i(\x)$.
The players compete with each other when their targets $\t_1, \ldots, \t_n$ are distinct (though our setup allows some or all targets to overlap, too).
The fact that the mechanism's $\hat\x$ is defined by the joint action $\x$ couples the players together in a general-sum game.

The above narrative can be abstracted into the following formal affine mechanism game,
\begin{df} [Affine Mechanism Game (AMG)] \label{df:game}
The Affine Mechanism Game is an $n$-player general-sum game $G = \left(n, \X, \{\ell_i\}_{i=1}^{n}\right)$, 
where $n$ is the number of players, $\X\subset \R^d$ is a compact and convex action space, and $\ell_i: \X^n \mapsto \R$ taking the form of equation~\eqref{eq:ell} is player $i$'s loss function.
\end{df}


In the rest of this section, 
we assume that the targets (and hence loss functions) are common knowledge to all players.
We show $G$ is a potential game and study its pure strategy Nash equilibria.
We then characterize the properties of these pure NEs.
In section~\ref{sec:BG} we relax this assumption and study the resulting Bayesian game.

\subsection{AMG is a Potential Game}
\label{sec:potential}
\begin{df}  \label{df:ne} 
A pure strategy Nash equilibrium of the game $G = \left(n, \X, \left\{\ell_i\right\}_{i=1}^{n}\right)$ is a strategy profile $\x\in \X^n$ satisfying
$\ell_{i}\left(\x_{i}, \x_{-i}\right) \le \ell_{i}\left(\y, \x_{-i}\right), \forall\; \y \in \X, i \in [n]$ .
\end{df}
Our AMG satisfies the Debreu-Glicksberg-Fan theorem and pure NEs exist.
To find the pure NEs explicitly, we turn to potential games. Potential games are games with a special structure that allow strong results on pure Nash equilibria (\cite{monderer1996potential}).
We now show AMG is a potential game:

\begin{thm} [Potential Game] \label{thm:ctp} 
The AMG $G$ is a potential game with the potential function
\begin{equation}
\phi(\x_1,\ldots,\x_n) := \left\| \sum_{i=0}^n w_i \x_i \right\|_2^2 - 2 \sum_{i=1}^n w_i \t_i^\top \x_i.
\label{eq:potential}
\end{equation}
\end{thm}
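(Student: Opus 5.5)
We verify the defining property of an exact potential game in the sense of \cite{monderer1996potential}. Fix a player $i\in[n]$, an opponent profile $\x_{-i}$, and two actions $\x_i,\y\in\X$. We will show
\[
\ell_i(\x_i,\x_{-i})-\ell_i(\y,\x_{-i}) \;=\; \phi(\x_i,\x_{-i})-\phi(\y,\x_{-i}).
\]
This shows that every unilateral deviation changes $\ell_i$ and $\phi$ by exactly the same amount, so $G$ is an exact potential game with potential $\phi$.

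The computation has one step. Write $\mathbf{s}:=w_0\x_0+\sum_{j\neq i} w_j\x_j$, which does not depend on player $i$'s action. Then $\hat\x=\mathbf{s}+w_i\x_i$, and expanding the square gives
\[
\ell_i(\x_i,\x_{-i})=\|\mathbf{s}+w_i\x_i\|_2^2-2\t_i^\top(\mathbf{s}+w_i\x_i)+\|\t_i\|_2^2 .
\]
The potential splits the same way:
\[
\phi(\x_i,\x_{-i})=\|\mathbf{s}+w_i\x_i\|_2^2-2w_i\t_i^\top\x_i-2\sum_{j\neq i}w_j\t_j^\top\x_j .
\]
Subtracting, $\ell_i-\phi=-2\t_i^\top\mathbf{s}+\|\t_i\|_2^2+2\sum_{j\neq i}w_j\t_j^\top\x_j$. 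This difference depends only on $\x_{-i}$, not on $\x_i$, so it cancels when we compare $\x_i$ with $\y$. That yields the displayed identity.

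The only point needing care is that the cross term $-2\t_i^\top\hat\x$ in $\ell_i$ has an $\x_i$-dependent part $-2w_i\t_i^\top\x_i$. This is exactly the term that appears in $\phi$, and the quadratic term $\|\hat\x\|^2$ is shared by all players. Beyond this, there is no real obstacle. We can also note the consequence that minimizers of $\phi$ over the compact set $\X^n$ exist by continuity, and each such minimizer is a pure NE.
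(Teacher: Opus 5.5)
Your proof is correct and follows essentially the same route as the paper. Both isolate the opponents' aggregate $w_0\x_0+\sum_{j\neq i}w_j\x_j$ and show that the $\x_i$-dependent parts of $\ell_i$ and $\phi$ coincide; you phrase it as $\ell_i-\phi$ depending only on $\x_{-i}$, while the paper subtracts the two unilateral differences directly, and your closing remark that minimizers of $\phi$ over $\X^n$ are pure NEs is also correct though not needed here.
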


As a potential game, AMG has pure NEs.  Section~\ref{sec:structure} will be dedicated to the structure of these pure NEs.
First, we remark on algorithms for finding pure NEs of AMG.
Due to Proposition~\ref{prop:cvx}, any convex optimization algorithm that minimizes the convex function $\phi$ over the convex set $\X^n$ with strong guarantees can be utilized to find a pure NE~(\cite{boyd2004convex}).
Meanwhile, in the game theory community, the best-response dynamics is a traditional algorithm for finding a pure NE in potential games~(\cite{roughgarden2010algorithmic}): 
\begin{rmk} [Best-Response Dynamics] \label{rmk:br} 
Starting from an arbitrary $\x^{\left(0\right)} \in \X^n$, the best response sequence $\left\{\x^{\left(t\right)}\right\}_{t=1}^{\infty}$, under standard coordinate descent conditions (\cite{tseng2001convergence}), every limit point of the sequence is a minimizer of $\phi$, hence a pure Nash equilibrium of $G$, where
$\mathbf{x}^{\left(t\right)}_{i} = \mathop{\mathrm{argmin}}_{\mathbf{x}_{i}} \ell_i\left(\mathbf{x}_{i}, \mathbf{x}^{\left(t - 1\right)}_{-i}\right)$ if $i = ((t-1) \mod n) + 1$, and $\mathbf{x}^{\left(t-1\right)}_{i}$ otherwise.
\end{rmk}

One interesting observation that is relevant for AMG is that, under best response dynamics, no player needs to know other players' targets.
Concretely, the players may carry out the best response dynamics as a learning dynamics in a distributed fashion over time, with no two players simultaneously updating their actions.
When player $i$ updates its own action $\x_i$, it observes other player's most recent actions $\x_{-i}$ but does not need to know their targets $\t_{-i}$.
Therefore, the best response dynamics may offer a computational account of how players in the real world iteratively adjust their actions based on actions of other players without knowing the other players' true intentions, and still reach an equilibrium.

Note that minimizing the potential function $\phi$ along the $\x_i$ direction is equivalent to minimizing its own loss function $\ell_i$:
$\mathop{\mathrm{argmin}}_{\x_i} \phi(\x_i, \x_{-i})=
\mathop{\mathrm{argmin}}_{\x_i} \ell_{i}(\x_i, \x_{-i})$. As a result, best response dynamics also correspond to cyclic coordinate descent on $\phi$ (~\cite{wright2022optimization}).

\subsection{Structure of Pure Nash Equilibria in AMG}
\label{sec:structure}

We next show that $\x\in\X^n$ is a pure NE of $G$ if and only if $\x$ is a minimum of $\phi$ restricted to the domain $\X^n$. 

\begin{prop} [Pure NEs $\iff$ minima] \label{prop:cvx} 
The set of pure Nash equilibria in $G$ is
$\NEG = \mathop{\mathrm{argmin}}_{\x\in\X^n} \phi(\x)$,
and moreover, the set of correlated equilibria is the set of mixtures of pure Nash equilibria, that is,
$\mathrm{CE}(G) = \Delta\NEG$.
\end{prop}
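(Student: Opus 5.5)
We prove the two claims of Proposition~\ref{prop:cvx} separately. Both rest on Theorem~\ref{thm:ctp} together with the convexity of $\phi$ and of $\X^n$.

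\textbf{Step 1: exact potential and convexity.} Expanding $\ell_i$ shows that
\[
\ell_i(\x)-\phi(\x)=\|\t_i\|^2-2\t_i^\top\Big(w_0\x_0+\sum_{j\ne i}w_j\x_j\Big),
\]
which does not depend on $\x_i$. Hence $\ell_i(\y,\x_{-i})-\ell_i(\x_i,\x_{-i})=\phi(\y,\x_{-i})-\phi(\x_i,\x_{-i})$, so $\phi$ is an exact potential. This is the content of Theorem~\ref{thm:ctp}. Moreover, $\phi$ is the composition of a convex quadratic with a linear map, plus a linear term, so $\phi$ is convex and differentiable on $\X^n$, and $\X^n$ is compact and convex.

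\textbf{Step 2: pure NE $=$ argmin.} ($\supseteq$) Let $\x$ be a global minimizer of $\phi$ over $\X^n$. For any unilateral deviation $\y \in \X$, we have $\phi(\y,\x_{-i})\ge\phi(\x)$. By Step 1, $\ell_i(\y,\x_{-i})\ge\ell_i(\x)$, so $\x$ is a pure NE.

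($\subseteq$) Let $\x$ be a pure NE. By Step 1, each $\x_i$ minimizes the convex function $\phi(\cdot,\x_{-i})$ over $\X$. The first-order condition then gives $\nabla_i\phi(\x)^\top(\y_i-\x_i)\ge0$ for all $\y_i\in\X$. Summing over $i$ and using the product structure of $\X^n$, we get $\nabla\phi(\x)^\top(\y-\x)\ge0$ for all $\y\in\X^n$. This is exactly the optimality condition for convex minimization over $\X^n$, so $\x$ is a global minimizer. We stress that this blockwise-to-global step requires both differentiability and the product structure of the domain. A nonsmooth convex function would not suffice. Existence of a pure NE follows because $\phi$ is continuous on a compact set.

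\textbf{Step 3: correlated equilibria.} The set $\NEG$ is a convex compact set, namely the argmin of a convex function. We need to show that a distribution $\mu$ on $\X^n$ is a CE iff $\mu$ is supported on $\NEG$.

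($\Leftarrow$) Suppose $\mu$ is supported on $\NEG$. A recommendation $\x\sim\mu$ is a NE pointwise, so any deviation map $\x_i\mapsto\delta(\x_i)$ satisfies $\ell_i(\x)\le\ell_i(\delta(\x_i),\x_{-i})$ for each $\x$ in the support. Taking expectations shows $\mu$ is a CE.

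($\Rightarrow$) This is the main obstacle. Our plan is to use the standard fact that in a potential game with a convex, smooth potential, correlated equilibria are supported on minimizers of the potential (in the style of Neyman's argument). Concretely, let $\mu$ be a CE. For each $i$ consider the deviation $\delta_i(\x_i)=\x_i+\epsilon(\mathbf{z}_i(\x_i)-\x_i)$, where $\mathbf{z}_i$ is a measurable choice of points in $\X$. Dividing the CE inequality by $\epsilon$ and letting $\epsilon\to0$ gives
\[
\mathbb{E}_\mu\big[\nabla_i\phi(\x)^\top(\mathbf{z}_i(\x_i)-\x_i)\big]\ge0 .
\]
Now take $\mathbf{z}_i\equiv\x^*_i$ for a fixed minimizer $\x^*$ and sum over $i$. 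Using convexity,
\[
0\le \mathbb{E}_\mu\big[\nabla\phi(\x)^\top(\x^*-\x)\big]\le\mathbb{E}_\mu[\phi(\x^*)-\phi(\x)]\le0 .
\]
Therefore $\phi(\x)=\phi(\x^*)$ $\mu$-almost surely, so $\mu$ is supported on $\NEG$. Since $\NEG$ is closed, this means $\mu\in\Delta\NEG$.

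The delicate points in this last step are interchanging the limit and the expectation, which is justified by bounded gradients on the compact set $\X^n$ and dominated convergence, and the measurability of the deviation maps. The constant deviation $\x^*_i$ avoids the measurability issue.
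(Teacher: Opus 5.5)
Your proof is correct and is essentially the paper's route. The paper's proof only invokes Theorem~1 of Neyman (1997) and its corollary for a smooth convex potential on the convex domain $\X^n$. Your Steps~2--3 (blockwise first-order conditions summed over the product set, and the vanishing-$\epsilon$ deviation toward a fixed minimizer $\x^*$ for correlated equilibria) are essentially the argument behind that citation, written out so the proof is self-contained and explicitly covers the $\mathrm{CE}(G)=\Delta\NEG$ claim.

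One harmless slip: the identity in Step~1 omits the term $+2\sum_{j\ne i} w_j \t_j^\top \x_j$. That term is also independent of $\x_i$, so the exact-potential conclusion is unaffected.
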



We remark that by definition $\X$ is compact and convex, thus $\X^n$ is bounded and closed.  
The potential function $\phi(\x)$ may not have a global minimum on the extended domain $\R^{nd}$ (it could diverge to $-\infty$ there), but on $\X^n$ it will have at least one minimum (perhaps on the boundary).
In fact, we have the following guarantee.

\begin{cor} [Cardinality of $\NEG$] \label{cor:cvx} $\NEG$ is non-empty and convex, that is, $G$ has either one pure NE or infinite pure NEs.
\end{cor}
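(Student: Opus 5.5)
By Proposition~\ref{prop:cvx}, $\NEG = \argmin_{\x\in\X^n}\phi(\x)$. The plan is to show separately that this argmin set is non-empty and that it is convex. The dichotomy ``one or infinitely many'' then follows immediately, since a convex set with two distinct points contains the whole segment between them.

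\emph{Non-emptiness.} The function $\phi$ in \eqref{eq:potential} is a polynomial in the coordinates of $\x$, hence continuous on $\R^{nd}$. The set $\X$ is compact, so $\X^n$ is compact by Tychonoff (here just a finite product of compact subsets of Euclidean space). By the Weierstrass extreme value theorem, $\phi$ attains its minimum on $\X^n$, so the argmin is non-empty.

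\emph{Convexity of $\phi$.} Write $A:\R^{nd}\to\R^d$ for the linear map $A\x=\sum_{i=1}^n w_i\x_i$. Then
\[
\phi(\x)=\|A\x + w_0\x_0\|_2^2 - 2\sum_{i=1}^n w_i\t_i^\top\x_i .
\]
The first term is the convex function $\|\cdot\|_2^2$ composed with an affine map, so it is convex. The second term is linear. Hence $\phi$ is convex on $\R^{nd}$, although not strictly convex in general, because $A$ has a nontrivial kernel when $n\ge 2$. Also, $\X^n$ is convex as a product of convex sets.

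\emph{Convexity of the argmin.} Let $m=\min_{\X^n}\phi$ and take $\x,\y\in\argmin$. For $\lambda\in[0,1]$ the point $\lambda\x+(1-\lambda)\y$ lies in $\X^n$, and
\[
\phi(\lambda\x+(1-\lambda)\y)\le\lambda\phi(\x)+(1-\lambda)\phi(\y)=m.
\]
By minimality of $m$ this is an equality, so the convex combination is also a minimizer. Therefore $\NEG$ is convex. If it contains two distinct points $\x\neq\y$, then the map $\lambda\mapsto\lambda\x+(1-\lambda)\y$ on $[0,1]$ is injective and gives uncountably many pure NEs; otherwise there is exactly one.

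There is no real obstacle here. The only step needing care is to rely on Proposition~\ref{prop:cvx} rather than re-derive the NE/minimizer equivalence, and to notice that plain convexity, not strict convexity, is all that is needed.
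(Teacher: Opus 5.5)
Your proof is correct and follows the same route as the paper: you use Proposition~\ref{prop:cvx} to identify $\NEG$ with $\argmin_{\x\in\X^n}\phi(\x)$, get non-emptiness from compactness, and get convexity from the convexity of $\phi$ and of $\X^n$. The only difference is that you prove the continuity/Weierstrass step and the convexity of the argmin directly, where the paper cites the corollary in Neyman (1997).
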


In particular, the second part of Proposition~\ref{prop:cvx} and this Corollary imply that in the case of a unique pure strategy Nash equilibrium $\x$ (for example, in the 1D two-player case with distinct targets, which we expand on in the next subsection), then $\x$ is the unique Nash equilibrium and the unique correlated equilibrium.

We provide a few illustrative examples of AMG in subsection~\ref{sec:pure}. In AMG the pure NEs often seem to involve \emph{all} players misrepresenting their true target to the mechanism (i.e. $\x_i \neq \t_i$).
Furthermore, such misrepresentation often seems to take the form of \emph{extreme exaggeration}, in the sense that a player's rational action $\x_i$ at any pure NE is often pushed to the boundary of the action space $\X$ so they cannot exaggerate the action further.
These observations are almost true, but need a subtle correction: 
Given that the players' targets $\t_1 \ldots \t_n$ are all distinct, \textbf{at any pure NE, extreme exaggeration is necessary for all (except perhaps one) players}.
In other words, there are two possibilities:
Either all players are at the boundary of $\X$, or one of them is in the interior of $\X$.
In the latter case, that special player (say player $i^*$) will in general still need to exaggerate its target $\x_{i^*} \neq \t_{i^*}$ to the mechanism; it is just that $\x_{i^*}$ is in the interior of $\X$ and not at the boundary.
Furthermore, player $i^*$ is the lucky winner
in that the mechanism will end up at its target $\t_{i^*}$. 
Our next theorem precisely characterizes this phenomenon. 


\begin{thm} [All-But-At-Most-One Extreme Exaggeration] \label{thm:bdy} If $\left\{\mathbf{t}_{i}\right\}_{i=1}^{n}$ are all distinct, then every pure NE satisfies the property that
$\left| \left\{i\in[n] : \mathbf{x}_{i} \in \mathrm{int\;} \X\right\} \right| \leq 1$. 
Conversely, in a NE, if $\x_i, \x_j \in \mathrm{int\;} \X$, then $\t_i = \t_j$. Furthermore, if $\x_{i^\star} \in \mathrm{int\;} \X$ for some $i^\star \in \left[n\right]$, then
$\hat{\mathbf{x}}= \mathbf{t}_{i^\star}$.
\end{thm}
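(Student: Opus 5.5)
The argument rests on first-order optimality of each player's best response. By Definition~\ref{df:ne}, at a pure NE each $\x_i$ minimizes $\ell_i(\cdot,\x_{-i})$ over $\X$. This function is a smooth convex quadratic in $\x_i$ with gradient
\[
\nabla_{\x_i}\ell_i(\x) = 2w_i(\hat\x - \t_i).
\]
If $\x_i \in \mathrm{int\;}\X$, then $\x_i$ is an unconstrained local, hence global, minimizer of this convex function restricted to a neighborhood. So the gradient must vanish there, and since $w_i>0$ we get $\hat\x = \t_i$. This already proves the last claim of Theorem~\ref{thm:bdy}: if $\x_{i^\star}$ is interior, the mechanism output equals $\t_{i^\star}$.

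The remaining claims follow from this. The key point is that $\hat\x$ is a single point determined by the whole profile $\x$. Suppose $\x_i,\x_j\in\mathrm{int\;}\X$ at an NE. Applying the stationarity argument to both players gives $\t_i=\hat\x=\t_j$, which is the ``conversely'' statement. Taking the contrapositive, if all targets are distinct then no two players can be interior at the same time. Hence $\left|\{i : \x_i\in\mathrm{int\;}\X\}\right|\le 1$.

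I do not expect a real obstacle. The only point needing care is justifying that an interior constrained minimizer is a stationary point: an interior point has a ball around it inside $\X$, so directional derivatives in every direction must be nonnegative, and this forces the gradient to be zero. An alternative route goes through Proposition~\ref{prop:cvx} and the KKT conditions for minimizing $\phi$, since $\nabla_{\x_i}\phi = 2w_i(\hat\x - \t_i)$ as well, but the direct best-response argument is simpler. If $\mathrm{int\;}\X$ is empty, for instance because $\X$ is lower-dimensional, the statements hold vacuously.
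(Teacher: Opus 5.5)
Your proposal is correct and is essentially the paper's proof: the paper sets $\nabla_{\x_i}\phi = 2w_i(\hat\x - \t_i)$ to zero at any interior $\x_i$ and concludes $\t_i = \hat\x = \t_j$ for two interior players. Since $\nabla_{\x_i}\phi = \nabla_{\x_i}\ell_i$, this is exactly your stationarity argument; working directly from the NE definition, as you do, avoids the implicit appeal to Proposition~\ref{prop:cvx} and spells out the interior-minimizer step that the paper leaves unstated.
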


\subsection{Computing Pure Nash Equilibria}
\label{sec:pure}

We now show how to find pure NEs in high dimensional action space by decomposing the problem.
The base case is when $\X$ is in a one-dimensional space, under the assumption that it is compact and convex, we can write $\X = \left[L, U\right]$, and in this case, we can fully characterize the set of all Nash equilibria as follows.

\begin{lem} [Nash Equilibria in $1D$] \label{lem:odne} 
When $\X = \left[L, U\right]$, then there exists some $t^\star \in \mathbb{R}$, such that in every Nash equilibrium $\left(x_{1}, x_{2}, ..., x_{n}\right)$ of AMG, we have
\begin{equation} \begin{aligned}
\label{eq:odne}
&x : \begin{cases} x_{i} = L & \text{\;if\;} t_{i} < t^\star \\ x_{i} :  \displaystyle\sum_{i : t_{i} = t^\star} w_{i} x_{i} = t^\star - w_{0} x_{0} - \displaystyle\sum_{i : t_{i} < t^\star} w_{i} L - \displaystyle\sum_{i : t_{i} > t^\star} w_{i} U & \text{\;if\;} t_{i} = t^\star \\ x_{i} = U & \text{\;if\;} t_{i} > t^\star \\ \end{cases} .
\end{aligned} \end{equation}\end{lem}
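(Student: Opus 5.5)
Since the 1D loss is $\ell_i(x) = (\hat x - t_i)^2$ with $w_i>0$, the plan is to read off each player's first-order (KKT) condition on the interval $[L,U]$. Then I use Proposition~\ref{prop:cvx} to show that the aggregate $\hat x$ is the same in every NE, and take $t^\star$ to be that common value.

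\emph{Step 1 (best-response conditions).} Fix a NE $x$. Player $i$ minimizes the convex quadratic $(\hat x - t_i)^2$ over $x_i\in[L,U]$, and $\partial \ell_i/\partial x_i = 2w_i(\hat x - t_i)$ with $w_i>0$. Optimality of $x_i$ is therefore equivalent to three cases. If $\hat x > t_i$, then $x_i = L$. If $\hat x < t_i$, then $x_i = U$. If $\hat x = t_i$, then $x_i$ may be anything in $[L,U]$.

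\emph{Step 2 (uniqueness of $\hat x$ across equilibria).} By Proposition~\ref{prop:cvx}, $\NEG = \argmin_{\X^n}\phi$. In 1D the potential is $\phi(x) = s(x)^2 - 2\sum_i w_i t_i x_i$ with $s(x) = \sum_{i=0}^n w_i x_i = \hat x$. This is strictly convex in the scalar $s$ plus a linear term. Suppose two minimizers $x, x'$ had $s(x)\neq s(x')$. Their midpoint also lies in $\X^n$, and by strict convexity of $s\mapsto s^2$ its potential would be strictly below the average of the two, hence below the minimum. This contradiction shows $\hat x$ takes a single value over all NEs, and I set $t^\star := \hat x$.

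\emph{Step 3 (assemble).} With $\hat x = t^\star$, Step 1 gives $x_i = L$ when $t_i < t^\star$ and $x_i = U$ when $t_i > t^\star$. For players with $t_i = t^\star$, their individual values are unconstrained, but rearranging the identity $t^\star = w_0x_0 + \sum_i w_i x_i$ after substituting the forced values of the other players yields exactly the linear constraint in \eqref{eq:odne}. If no player has $t_i = t^\star$, that constraint is the consistency condition $t^\star = w_0x_0 + \sum_{t_i<t^\star} w_iL + \sum_{t_i>t^\star} w_iU$, which holds automatically.

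The only nontrivial step is Step 2, namely that a single $t^\star$ works for \emph{every} equilibrium. The per-equilibrium characterization is immediate from the KKT conditions, whereas uniformity needs the convexity argument via Proposition~\ref{prop:cvx}, which pins down the common aggregate. Existence of at least one NE, so that $t^\star$ is well defined, follows from Corollary~\ref{cor:cvx}.
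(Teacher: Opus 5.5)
Your proof is correct, but it takes a different route from the paper. The paper first proves a sorting lemma (Lemma~\ref{lem:sort}: if $t_i<t_j$ but $x_j<x_i$, one of the two players has a profitable $\epsilon$-deviation). From it the paper deduces cut indices $\underline{i},\overline{i}$, with the players below at $L$ and those above at $U$. It then invokes Theorem~\ref{thm:bdy} to conclude that the interior players in between share a common target $t^\star$ and force $\hat x=t^\star$.

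You instead set $t^\star:=\hat x$ and read the whole structure off the sign of $\partial\ell_i/\partial x_i=2w_i(\hat x-t_i)$. Monotonicity in the targets then comes for free, and neither the sorting lemma nor Theorem~\ref{thm:bdy} is needed. This gives you two things the paper's argument leaves implicit. First, $t^\star$ is well defined even when no player is interior; in that case it need not equal any target, and in general it need not lie in $[L,U]$. Second, your Step~2 shows $t^\star$ is the same for all equilibria. That is what the quantifier order ``there exists $t^\star$ such that in every NE'' requires, whereas the paper extracts $t^\star$ from each equilibrium separately.

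The paper's route, in exchange, makes the ordered picture with $\underline{i},\overline{i}$ explicit, and that is what the subsequent remark and search algorithm use.

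Incidentally, Step~2 does not need Proposition~\ref{prop:cvx}. Suppose two equilibria had $\hat x<\hat x'$. Every player has either $t_i<\hat x'$ (so $x'_i=L\le x_i$) or $t_i>\hat x$ (so $x_i=U\ge x'_i$). Since all $w_i>0$, this gives $\hat x'\le\hat x$, a contradiction.
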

In particular, when targets are sorted $t_{1} \leq t_{2} \leq ... \leq t_{n}$, ~\eqref{eq:odne} is equivalent to the existence of $\underline{i}$ and $\overline{i}$ such that in every NE $x_{1} = x_{2} = ... = x_{\underline{i}} = L$, $x_{\overline{i}} = x_{\overline{i} + 1} = ... = x_{n} = U$, and if there are players with the target $t_{i} = t^\star$ with $\underline{i} < i < \overline{i}$, then it is NE as long as their $x_{i}$ results in $\hat{x} = t^\star$.
\begin{rmk}  \label{rmk:odalg} 
A simple algorithm to find an NE is to iterate over the following set, $t^\star \in \left\{t_{1}, t_{2}, ..., t_{n}\right\} \cup \left\{\dfrac{1}{2} \left(t_{1} + t_{2}\right), \dfrac{1}{2} \left(t_{2} + t_{3}\right), ..., \dfrac{1}{2} \left(t_{n-1} + t_{n}\right)\right\} \cup \left\{L, U\right\}$, and for each candidate $t^\star$, compare the value of the potential function $\phi\left(x\right)$ with $x$ satisfying ~\eqref{eq:odne}, and the $x$ that minimizes $\phi$ is a NE.

\end{rmk}

In general, it is difficult to find a closed form solution in higher dimensions, but for special $\X$ that is a Cartesian product of intervals (i.e. a hyper-rectangle), that is $\X = \displaystyle\prod_{k=1}^{d} \X_{k}$ with $\X_{k} = \left[L_{k}, U_{k}\right]$, we provide the following decomposition result, which implies an algorithm to solve $d$ one-dimensional problems using Lemma~\ref{lem:odne} to obtain the NE coordinate-wise.

Since the loss function is coordinate-wise-separable, $\ell_i(\x) = \displaystyle\sum_{k=1}^d \ell_{ik}(\x_k)$, where $\ell_{ik}(\x_k) := (\hat{\x}_k-\t_{ik})^2$. 
We can define for each $k\in[d]$, $G_k$ with the action space $\X_k = [L_k, U_k]$, loss $\ell_{ik}$, and mechanism $\hat{\x}_k = w_0\x_{0k} + \displaystyle\sum_{i=1}^n w_i\x_{ik}$, for any player $i \in [n]$. Then, the pure Nash Equilibrium of the multi-dimensional game is simply a cross-product of all $d$ $\mathrm{pure NE}(G_k)$ sets. More formally, a conversion between the Nash equilibrium in $G$ and the 1D games $G_k$ can be established through the following proposition:

\begin{prop} [Decomposition of Multi-dimensional AMG] \label{prop:dec} 
Assume $\X=\displaystyle\prod_{k=1}^d \X_{k}\subseteq\R^d$, then the set of pure Nash equilibria of $G$ can be characterized by $\mathrm{pure NE}(G) = \displaystyle\prod_{k=1}^d \mathrm{pure NE}(G_k)$, where $G_k = (n, \X_k, \ell_{ik})$.
\end{prop}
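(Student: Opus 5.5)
We prove Proposition~\ref{prop:dec} by showing that both sides equal the set of minimizers of a potential that splits across coordinates. By Proposition~\ref{prop:cvx}, $\mathrm{pure NE}(G) = \argmin_{\x\in\X^n}\phi(\x)$. Each one-dimensional game $G_k$ is itself an AMG with action space $\X_k=[L_k,U_k]$, bias $\x_{0k}$ and targets $\t_{ik}$. So Theorem~\ref{thm:ctp} and Proposition~\ref{prop:cvx} apply to it, giving $\mathrm{pure NE}(G_k)=\argmin_{\x_{\cdot k}\in\X_k^n}\phi_k(\x_{\cdot k})$. Here $\x_{\cdot k}=(\x_{1k},\ldots,\x_{nk})$ and
\[
\phi_k(\x_{\cdot k}) := \Big(\sum_{i=0}^n w_i \x_{ik}\Big)^2 - 2\sum_{i=1}^n w_i \t_{ik}\x_{ik}.
\]
The problem therefore reduces to a statement about minimizers of a separable function over a product set.

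The key identity comes from expanding the squared $2$-norm and the inner product coordinatewise in \eqref{eq:potential}:
\[
\phi(\x)=\sum_{k=1}^d \phi_k(\x_{\cdot k}).
\]
Next we identify the domain. Since $\X=\prod_k \X_k$, a profile $\x$ lies in $\X^n$ if and only if $\x_{\cdot k}\in\X_k^n$ for every $k$. Reordering the coordinates, $\X^n$ is the product $\prod_{k=1}^d \X_k^n$, and $\phi$ is a sum of functions, each depending on exactly one factor.

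We then show that the argmin of a separable sum over a product set is the product of the individual argmins. Each $\phi_k$ is continuous on the compact set $\X_k^n$, so each argmin set is nonempty, with minimum value $m_k$.

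For one inclusion, take $\x$ with $\x_{\cdot k}$ minimizing $\phi_k$ for every $k$. For any $\y\in\X^n$ we have $\phi(\y)=\sum_k\phi_k(\y_{\cdot k})\ge\sum_k m_k=\phi(\x)$, so $\x$ minimizes $\phi$.

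For the other inclusion, suppose $\x$ minimizes $\phi$ but some $\x_{\cdot k}$ is not a minimizer of $\phi_k$. Replace that block by a minimizer of $\phi_k$ and keep the others fixed. The result stays in $\X^n$ by the product structure and strictly decreases $\phi$, which is a contradiction.

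Combining these gives $\argmin_{\X^n}\phi=\prod_k\argmin_{\X_k^n}\phi_k$, which is the claim after the identification of coordinates above.

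There is no genuine obstacle. The only point needing care is making explicit that $G_k$ satisfies the hypotheses of Theorem~\ref{thm:ctp} and Proposition~\ref{prop:cvx}: $\X_k$ is compact and convex, and the weights are unchanged. An alternative that avoids the potential entirely is to note that player $i$'s best-response problem in $G$ is $\min_{\x_i\in\prod_k\X_k}\sum_k\ell_{ik}$. Because both the loss and the feasible set are separable, this problem decouples coordinatewise, and the Nash conditions themselves factor.
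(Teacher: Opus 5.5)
Your proof is correct, but it takes a different route from the paper. The paper never uses the potential here. It writes $\ell_i(\x)=\sum_{k}\ell_{ik}(\x_k)$ and checks the Nash inequalities directly, in two steps:
\begin{itemize}
\item Suppose $\x$ is a pure NE of $G$ but some player has a profitable deviation in $G_k$. Changing only the $k$th coordinate of $\x_i$ is then a profitable deviation in $G$, which is a contradiction.
\item Conversely, summing the per-coordinate NE inequalities over $k$ gives the NE inequality in $G$.
\end{itemize}
This is essentially the ``alternative'' you sketch in your final sentences.

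Your main route instead applies Proposition~\ref{prop:cvx} to $G$ and to each $G_k$. It then uses $\phi=\sum_k\phi_k$ and factorizes the argmin of a separable function over the product set $\prod_k\X_k^n$.

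Your route buys a structural fact: the potential and its minimizer set both factor. That ties the decomposition directly to computing equilibria by $d$ independent one-dimensional convex minimizations.

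It costs you reliance on the equivalence $\mathrm{pure NE}(G)=\argmin_{\X^n}\phi$. That equivalence genuinely needs convexity and smoothness of $\phi$, via Neyman's theorem or blockwise first-order conditions on the product domain. In a general potential game, equilibria need only be blockwise minimizers of the potential. The paper's direct argument needs only coordinatewise separability of each loss and the product structure of $\X$. So it would hold for any game with separable losses over product action sets, whether or not it is a potential game and whether or not it is convex.
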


Given this proposition, it is sufficient to form and solve the $d$ one-dimensional games $G_k$ independently, and take the Cartesian product of their pure Nash equilibrium sets to obtain $\mathrm{pure NE}(G)$ for multi-dimensional AMG $G$. 

\begin{eg} [NEs for Two-Player AMG in 1D and 2D]
In a one-dimensional two-player AMG $G$ with targets $t_1$ and $t_2$, pure actions $x_1, x_2 \in \X = [L, U]$, and mechanism $\hat{x} = w_0 x_0 + w_1 x_1 + w_2 x_2$, the pure NE set only depends on the relative positions of $t_1$ and $t_2$ with respect to the weighted middle point $m$, 
where $m = w_0x_0 + w_1L + w_2U$ if $t_1 \le t_2$, and $m = w_0x_0 + w_1U + w_2L$ if $t_1 > t_2$. The closed form solution of pure NE can be divided into three cases:
\begin{enumerate}
\item If $t_1=t_2=t$:
$
\mathrm{pure NE}(G) =
\begin{cases}
\{(L, L)\}, & \text{if}\; t < w_0x_0 + L(w_1 + w_2), \\
\{(U, U)\}, & \text{if}\; t > w_0x_0 + U(w_1 + w_2), \\
\{(x_1,x_2) : \hat{x}=t\}, & \text{otherwise}.
\end{cases}
$
\item If $t_1 \ne t_2$ and $t_1 < m < t_2$ or $t_2 < m < t_1$:
$
\mathrm{pure NE}(G) = 
\begin{cases}
\{(L, U)\}, & \text{if}\; t_1 < t_2, \\
\{(U, L)\}, & \text{if}\; t_2 < t_1.
\end{cases}
$
\item If $t_1 \ne t_2$ and $t_1, t_2 \le m$ or $t_1, t_2 \ge m$:

$
\mathrm{pure NE}(G) = 
\begin{cases}
\{(L, \max(L, \frac{1}{w_2} (t_2 - w_0x_0 - w_1L)))\}, & \text{if}\; t_1 < t_2 \le m, \\
\{(\max(L, \frac{1}{w_1} (t_1 - w_0x_0 - w_2L)), L)\}, & \text{if}\; t_2 < t_1 \le m, \\
\{(U, \min(U, \frac{1}{w_2} (t_2 - w_0x_0 - w_1U)))\}, & \text{if}\; m \le t_2 < t_1, \\
\{(\min(U, \frac{1}{w_1} (t_1 - w_0x_0 - w_2U)), U)\}, & \text{if}\; m \le t_1 < t_2.
\end{cases}
$
\end{enumerate}

In case 1, the two targets coincide. NE happens when $\hat{x}$ is in the location closest to $t$. 
This is also the case when there can be infinite many pure NEs (the third branch). In case 2, the targets lie on different sides of m. Both players must exaggerate their actions to the maximum extent possible. In case 3, both targets lie on the same side of $m$. One player will exaggerate to the maximum extent, while the other either ``wins the game'' by forcing the mechanism to land on its own target or also exaggerates to the maximum when winning is impossible due to a large $w_0x_0$.

We illustrate in Figure~\ref{fig:1dne} the three NE cases for 1-dimensional two player AMG examples with domain $\X = [L, U]$, equal weights $w_1=w_2=\frac{1}{2}$, and bias term $w_0x_0=0$. In case 1, $x_1$ and $x_2$ can be anywhere in the gray box, as long as $\hat{x}$ sits on $t_1=t_2$. Thus, there are infinite pure NE. Case 2 and 3 both have a unique pure NE. In case 2 $x_1$ and $x_2$ both exaggerate to the extreme action, $L$ or $U$, that is closest to their respective targets. In case 3, player 1 plays an interior action and ``wins'' as $\hat{x}$ lands on $t_1$, while player 2 chooses an extreme action.
 
\begin{figure}
\centering
\begin{tikzpicture}[x=1pt,y=1pt,scale=0.75]
\begin{scope}[yshift=110pt]
\fill[fill=lightgray] (150,60) rectangle (450,40);
\draw[black,line width=1pt, |-|] (50,50) -- (450,50);
\fill[fill=orange] (295,45) rectangle (305,55);
\node[above, text=black] at (300,55) {$\hat{x}$};
\node[below, text=black] at (290,45) {$t_1$};
\node[below, text=black] at (310,45) {$t_2$};
\fill[fill=blue] (200,50) circle[radius=5];
\fill[fill=green] (400,50) circle[radius=5];
\node[above, text=black] at (200,55) {$x_1$};
\node[above, text=black] at (400,55) {$x_2$};
\node[below] at (250, 40) {Case 1};
\end{scope}

\begin{scope}[yshift=55pt]
\draw[black,line width=1pt, |-|] (50,50) -- (450,50);
\fill[fill=orange] (245,45) rectangle (255,55);
\fill[fill=blue] (95,45) rectangle (105,55);
\fill[fill=green] (345,45) rectangle (355,55);
\node[above, text=black] at (250,55) {$\hat{x}$};
\node[above, text=black] at (100,55) {$t_1$};
\node[above, text=black] at (350,55) {$t_2$};
\fill[fill=blue] (50,50) circle[radius=5];
\fill[fill=green] (450,50) circle[radius=5];
\node[above, text=black] at (50,55) {$x_1$};
\node[above, text=black] at (450,55) {$x_2$};
\node[below] at (250, 40) {Case 2};
\end{scope}

\begin{scope}[yshift=0pt]
\draw[black,line width=1pt, |-|] (50,50) -- (450,50);
\fill[fill=orange] (295,45) rectangle (305,55);
\fill[fill=green] (395,45) rectangle (405,55);
\node[above, text=black] at (300,55) {$\hat{x}$};
\node[below, text=black] at (295,45) {$t_1$};
\node[above, text=black] at (400,55) {$t_2$};
\fill[fill=blue] (150,50) circle[radius=5];
\fill[fill=green] (450,50) circle[radius=5];
\node[above, text=black] at (150,55) {$x_1$};
\node[above, text=black] at (450,55) {$x_2$};
\node[below] at (250, 40) {Case 3};
\end{scope}

\end{tikzpicture}
\caption{Pure NE of two-player 1D AMG with Equal Weights} \label{fig:1dne}
\end{figure}

Now, for multi-dimensional games we can combine the 1D pure NEs.
In particular, for a 2-dimensional game, the pure NE falls into six categories formed by combining the three 1-D characterizations, which yields $3\times3=9$ cases, then eliminating the three mirroring cases. In Figure~\ref{fig:tdne} we present these six characterizations of a 2-dimensional AMG with $\X = [L, U]^2$, uniform weights $w_1 = w_2 = \frac{1}{2}$, and $w_0x_0 = 0$. These derive directly from combining the three 1-dimensional NE cases in Figure~\ref{fig:1dne}.
In case 1 ($\t_1=\t_2$), $\x_1, \x_2$ can be anywhere in the gray box so long as they average to $\t_1$. 
In cases 2 and 3, $\t_1,\t_2$ agree on one coordinate, and the infinite many pure NEs are chosen from the blue line and the green line, respectively.
Cases 4,5,6 have a unique pure NE.
Case 6 is where player 1 is the ``winner'' and has an interior action.

\begin{figure}
\centering
\begin{tikzpicture}[x=1pt,y=1pt,scale=0.25]
\fill[fill=lightgray] (50,450) rectangle (350,150);
\draw[black, line width=2pt] (50,450) rectangle (450,50);
\draw[orange, line width=1pt, dashed] (100,200) -- (300,400);
\fill[fill=orange] (190,290) rectangle (210,310);
\node[above, text=orange] at (200,300) {$\hat{\x}$};
\fill[fill=blue] (100,200) circle[radius=10];
\node[below, text=blue] at (100,200) {$\x_1$};
\fill[fill=green] (300,400) circle[radius=10];
\node[below, text=green] at (300,400) {$\x_2$};
\node[left, text=blue] at (200,300) {$\t_1$};
\node[right, text=green] at (200,300) {$\t_2$};
\node[above] at (250, 450) {Case 1};
\end{tikzpicture}
\begin{tikzpicture}[x=1pt,y=1pt,scale=0.25]
\draw[black, line width=2pt] (50,450) rectangle (450,50);
\fill[fill=orange] (240,240) rectangle (260,260);
\node[above, text=orange] at (250, 250) {$\hat{\x}$};
\draw[orange, line width=1pt, dashed] (50,100) -- (450,400);
\draw[blue, line width=2pt] (50,50) -- (50,450);
\fill[fill=blue] (50,100) circle[radius=10];
\node[right, text=blue] at (50,100) {$\x_1$};
\draw[green, line width=2pt] (450,50) -- (450,450);
\fill[fill=green] (450,400) circle[radius=10];
\node[left, text=green] at (450,400) {$\x_2$};
\fill[fill=blue] (190,240) rectangle (210,260);
\node[below, text=blue] at (200,250) {$\t_1$};
\fill[fill=green] (290,240) rectangle (310,260);
\node[below, text=green] at (300,250) {$\t_2$};
\node[above] at (250, 450) {Case 2};
\end{tikzpicture}
\begin{tikzpicture}[x=1pt,y=1pt,scale=0.25]
\draw[black, line width=2pt] (50,450) rectangle (450,50);
\fill[fill=orange] (290,240) rectangle (310,260);
\node[above, text=orange] at (300, 250) {$\hat{\x}$};
\draw[orange, line width=1pt, dashed] (150,100) -- (450,400);
\draw[blue, line width=2pt] (150,50) -- (150,450);
\fill[fill=blue] (150,100) circle[radius=10];
\node[right, text=blue] at (150,100) {$\x_1$};
\draw[green, line width=2pt] (450,50) -- (450,450);
\fill[fill=green] (450,400) circle[radius=10];
\node[left, text=green] at (450,400) {$\x_2$};
\node[left, text=blue] at (300,250) {$\t_1$};
\fill[fill=green] (340,240) rectangle (360,260);
\node[below, text=green] at (350,250) {$\t_2$};
\node[above] at (250, 450) {Case 3};
\end{tikzpicture}
\begin{tikzpicture}[x=1pt,y=1pt,scale=0.25]
\draw[black, line width=2pt] (50,450) rectangle (450,50);
\fill[fill=orange] (240,240) rectangle (260,260);
\node[above, text=orange] at (250, 250) {$\hat{\x}$};
\draw[orange, line width=1pt, dashed] (50,50) -- (450,450);
\fill[fill=blue] (50,50) circle[radius=10];
\node[above right, text=blue] at (50,50) {$\x_1$};
\fill[fill=green] (450,450) circle[radius=10];
\node[below left, text=green] at (450,450) {$\x_2$};
\fill[fill=blue] (140,190) rectangle (160,210);
\node[below, text=blue] at (150,200) {$\t_1$};
\fill[fill=green] (340,290) rectangle (360,310);
\node[below, text=green] at (350,300) {$\t_2$};
\node[above] at (250, 450) {Case 4};
\end{tikzpicture}
\begin{tikzpicture}[x=1pt,y=1pt,scale=0.25]
\draw[black, line width=2pt] (50,450) rectangle (450,50);
\fill[fill=orange] (240,290) rectangle (260,310);
\node[above, text=orange] at (250, 300) {$\hat{\x}$};
\draw[orange, line width=1pt, dashed] (50,150) -- (450,450);
\fill[fill=blue] (50,150) circle[radius=10];
\node[below right, text=blue] at (50,150) {$\x_1$};
\fill[fill=green] (450,450) circle[radius=10];
\node[below left, text=green] at (450,450) {$\x_2$};
\fill[fill=blue] (190,290) rectangle (210,310);
\node[below, text=blue] at (200,300) {$\t_1$};
\fill[fill=green] (290,340) rectangle (310,360);
\node[below, text=green] at (300,350) {$\t_2$};
\node[above] at (250, 450) {Case 5};
\end{tikzpicture}
\begin{tikzpicture}[x=1pt,y=1pt,scale=0.25]
\draw[black, line width=2pt] (50,450) rectangle (450,50);
\fill[fill=orange] (290,290) rectangle (310,310);
\node[above, text=orange] at (300, 300) {$\hat{\x}$};
\draw[orange, line width=1pt, dashed] (150,150) -- (450,450);
\fill[fill=blue] (150,150) circle[radius=10];
\node[below, text=blue] at (150,150) {$\x_1$};
\fill[fill=green] (450,450) circle[radius=10];
\node[below left, text=green] at (450,450) {$\x_2$};
\node[left, text=blue] at (300,300) {$\t_1$};
\fill[fill=green] (340,340) rectangle (360,360);
\node[below, text=green] at (350,350) {$\t_2$};
\node[above] at (250, 450) {Case 6};
\end{tikzpicture}
\caption{Pure NE of two-player 2D AMG with Equal Weights}
\label{fig:tdne}
\end{figure}
\end{eg}

\section{Bayesian Games under the Affine Mechanism}
\label{sec:BG}
In the previous formulation of AMG, we assumed that each player knows the target location $\t_1, \ldots, \t_n$ of all players, and hence all loss functions are common knowledge. However, it is also interesting to consider the scenario in which each player (say player $i$) only knows its own target $\t_i$, but not any of the other players' target $\t_{-i}$.
For such games with incomplete information, we adopt an interim Bayesian game approach in which the player knows only the prior distributions of the other targets.

Concretely, the Bayesian Affine Mechanism Game has the same formulation as the standard AMG, except that any target $\t_i$ shall be chosen from a distribution $T_i$, for all $i \in [n]$. Each player $i$ knows the exact location of their own target $\t_i$. They also know the distributions of $T_j$, $\forall j \in [n]$, the AMG mechanism, and that other players also know this information. The players then make the simultaneous action $\x_i \in \X$ from the common continuous action space $\X \subset \mathbb{R}^d$ that is compact and convex. All other settings remain the same. We have the following definition:

\begin{equation} \begin{aligned}
\ell_{i}\left(\mathbf{x}_{i}, \mathbf{x}_{-i} | \mathbf{t}_{i}\right) &\coloneqq \left\|\hat{\mathbf{x}} - \mathbf{t}_{i}\right\|_{2}^{2} = \left\|w_{0} \mathbf{x}_{0} + \displaystyle\sum_{j=1}^{n} w_{j} \mathbf{x}_{j} - \mathbf{t}_{i}\right\|_{2}^{2} \label{eq:bell}
\end{aligned} \end{equation}
\begin{df} [Bayesian AMG] \label{df:bg} 
The Bayesian Affine Mechanism Game is an $n$-player general-sum Bayesian game $G = \left(n, \X, \left\{T_{i}\right\}_{i=1}^{n}, \left\{\ell_{i}\right\}_{i=1}^{n}\right)$, where $n$ is the number of players, $\X$ is the compact and convex action space, $T_{i}$ is the independent prior type distribution of player $i$'s target $\mathbf{t}_{i} \in \R^d$ and $\ell_{i} : \X^{n} \times \R^d \to  \mathbb{R}$ is given by~\eqref{eq:bell}.

\end{df}
We denote a player's strategy by $\mathbf{x}_{i} = \pi_{i}\left(\mathbf{t}_{i}\right)$ where $\pi_{i} : \R^d \to  \X$ maps their target position to an action $\mathbf{x}_{i} \in \X$. We again approach the problem with the assumption of a product space domain $\X = \prod_{k=1}^d[L_k, U_k]$, and the following result characterizes the best responses in the Bayesian game.

\begin{prop} [Bayesian Best Responses] \label{prop:bgbr} 
The best response function for each player $i$ in the Bayesian AMG game with $\X = \prod_{k=1}^d[L_k, U_k]$ has the ramp function form,
\begin{equation}\label{eq:bnebr}\begin{aligned}
&\text{br}_{i}\left(\pi_{-i} | t_{i}\right) = \x_i \text{, where for every\;} k \in [d]\\
&x_{ik} = \displaystyle\min\left\{U_k, \displaystyle\max\left\{L_k, \dfrac{1}{w_{i}} t_{ik} - \dfrac{1}{w_{i}} \left(w_0 x_{0k} + \displaystyle\sum_{j \neq i} w_{j} \mathbb{E}_{T_{j}}\left[\pi_{jk}\left(t_{j}\right)\right]\right)\right\}\right\}.
\end{aligned}\end{equation}
In particular, \eqref{eq:bnebr} implies that in all Bayesian Nash equilibria, all players use a strategy with this ramp function form in all coordinates.

\end{prop}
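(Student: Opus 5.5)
Fix player $i$, its realized target $\t_i$, and arbitrary strategies $\pi_{-i}$ of the others. The plan is to write the interim expected loss, reduce it to a coordinate-separable quadratic in $\x_i$ using the independence of the priors, and minimize each coordinate over $[L_k,U_k]$ by clipping the unconstrained minimizer. The interim expected loss is
\[
\mathbb{E}_{\t_{-i}}\Big[\ell_i(\x_i,\pi_{-i}(\t_{-i})\mid \t_i)\Big]=\sum_{k=1}^d \mathbb{E}\Big[\big(w_i x_{ik} + S_{k} - t_{ik}\big)^2\Big],
\]
where $S_k := w_0x_{0k}+\sum_{j\neq i} w_j \pi_{jk}(\t_j)$ is a random variable. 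It depends only on $\t_{-i}$, which by the independence of the priors in Definition~\ref{df:bg} is independent of $\t_i$. Player $i$ controls $\x_i$, and $\X=\prod_k[L_k,U_k]$ lets each $x_{ik}$ be chosen independently. So, exactly as in Proposition~\ref{prop:dec}, the minimization splits into $d$ one-dimensional problems.

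For a fixed coordinate $k$, expand the square:
\[
\mathbb{E}\big[(w_ix_{ik}+S_k-t_{ik})^2\big]=\big(w_i x_{ik}+\mathbb{E}[S_k]-t_{ik}\big)^2+\mathrm{Var}(S_k).
\]
The variance term does not depend on $x_{ik}$, so we are minimizing a strictly convex parabola in $x_{ik}$ (since $w_i>0$). Its unconstrained minimizer is
\[
x_{ik}=\tfrac{1}{w_i}\Big(t_{ik}-w_0x_{0k}-\sum_{j\ne i}w_j\mathbb{E}_{T_j}[\pi_{jk}(\t_j)]\Big).
\]
For a strictly convex one-dimensional function on an interval, the unique minimizer is the projection of the unconstrained minimizer onto $[L_k,U_k]$. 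That projection is $\min\{U_k,\max\{L_k,\cdot\}\}$, which gives exactly \eqref{eq:bnebr}. Because the minimizer is unique, the best response is single-valued.

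For the final claim, recall that in a Bayesian NE each $\pi_i(\t_i)$ must be a best response to $\pi_{-i}$ for (almost) every type $\t_i$. The formula above then forces $\pi_i$ to have the stated ramp form in every coordinate. The constant inside the ramp is determined by the equilibrium expectations $\mathbb{E}[\pi_{jk}]$.

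The main technical obstacle is making sure these expectations exist and that the variance decomposition is valid. This holds because $\pi_j$ takes values in the compact set $\X$: the strategies are bounded, so we only need them to be measurable. Given boundedness, all moments are finite and the expectations can be exchanged as above. The "almost every type" qualifier should be stated explicitly in the equilibrium claim.
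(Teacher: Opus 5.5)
Your proposal is correct and follows essentially the same route as the paper. Both split the interim expected loss coordinate-wise over the product domain, use independence to replace $\mathbb{E}[\pi_{jk}(\mathbf{t}_j)\mid \mathbf{t}_i]$ by $\mathbb{E}_{T_j}[\pi_{jk}(\mathbf{t}_j)]$, and clip the unconstrained minimizer of a strictly convex one-dimensional term. Your bias--variance identity is just a tidier way of getting the derivative conditions the paper checks at $L_k$, at $U_k$ and in the interior. Your boundedness/measurability point and the ``almost every type'' qualifier are sensible refinements that the paper leaves implicit.
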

To be specific, in a BNE, the ramp function can be rewritten as, for every player $i$, $\pi_{i}\left(t_{i}\right) = \x_i$, where $x_{ik} = \begin{cases} L_k & \text{\;if\;} t_{ik} \le a_{ik} \\ \dfrac{1}{w_{i}} t_{ik} + c_{ik} & \text{\;if\;} a_{ik} < t_{ik} < b_{ik} \\ U_k & \text{\;if\;} t_{ik} \ge b_{ik} \\ \end{cases} \; \forall k \in [d]$, where $c_{ik} = - \dfrac{1}{w_{i}} \left(w_0 x_{0k} +  \displaystyle\sum_{j \neq i} w_{j} \mathbb{E}_{T_{j}}\left[\pi_{jk}\left(t_{j}\right)\right]\right)$, $a_{ik} = w_{i} \left(L_k - c_{ik}\right)$, and $b_{ik} = a_{ik} + w_{i} \left(U_k - L_k\right) = w_{i} \left(U_k - c_{ik}\right)$. This set of equations in each dimension $k$ characterizes $\left\{c_{ik}\right\}_{i=1}^{n}$ thus $\pi_{i}$. Notice that the equation set in each dimension can be calculated independently.

This result can be interpreted as follows: \textbf{in any dimension, the exaggeration factor is always $\dfrac{1}{w_{i}}$, that is, the less influence the player has, the more the player will exaggerate.}


For arbitrary prior type distributions, $\left\{c_{ik}\right\}_{i=1}^{n}\; \forall k \in [d]$ can be solved numerically, but as an example, we can solve for a closed-form solution under the assumption of uniformly distributed targets for all players.

\begin{eg} [BNE for Uniformly Distributed Targets] \label{eg:1dbne}
A one-dimensional Bayesian AMG with $\X = [0, 1]$, targets $T_i \sim \text{Unif\;}[0, 1]$, bias term $w_0x_0 = 0$, and $\sum_{i=1}^nw_i = 1$ for all players has the unique BNE strategy profile $(\pi_1, \dots, \pi_n)$ given by
\begin{equation} \label{eq:ubne}
\pi_i^*(t_i)=
\begin{cases}
0, & t_i\le \dfrac{1-w_i}{2},\\
\frac{1}{w_i}t_i - \dfrac{1-w_i}{2w_i}, & \dfrac{1-w_i}{2}<t_i < \frac{1+w_i}{2},\\
1, & t_i \ge \dfrac{1+w_i}{2}.
\end{cases}
\end{equation}
\end{eg}

\begin{figure}
\centering
\begin{tikzpicture}[x=1pt,y=1pt,scale=0.2]
\draw[blue, line width=2pt] (50,50) -- (150,50) -- (350,400) -- (450,400);
\draw[black,->, line width=1pt] (50,50) -- (450,50);
\draw[black,->, line width=1pt] (50,50) -- (50,450);
\node[right, text=black] at (450,50) {$t_i$};
\node[above, text=black] at (50,450) {$x_i$};
\node[above] at (250, 500) {$n = 2$ uniform case};
\node[below, text=black] at (150,50) {$\frac{n-1}{2n}$};
\node[below, text=black] at (350,50) {$\frac{n+1}{2n}$};
\draw[black, dashed, line width=1pt] (150,50) -- (150,400);
\draw[black, dashed, line width=1pt] (350,50) -- (350,400);
\node[left, text=black] at (50,50) {$0$};
\node[left, text=black] at (50,400) {$1$};
\end{tikzpicture}
\begin{tikzpicture}[x=1pt,y=1pt,scale=0.2]
\draw[blue, line width=2pt] (50,50) -- (200,50) -- (300,400) -- (450,400);
\draw[black,->, line width=1pt] (50,50) -- (450,50);
\draw[black,->, line width=1pt] (50,50) -- (50,450);
\node[right, text=black] at (450,50) {$t_i$};
\node[above, text=black] at (50,450) {$x_i$};
\node[above] at (250, 500) {$n = 4$ uniform case};
\node[below, text=black] at (200,50) {$\frac{n-1}{2n}$};
\node[below, text=black] at (300,50) {$\frac{n+1}{2n}$};
\draw[black, dashed, line width=1pt] (200,50) -- (200,400);
\draw[black, dashed, line width=1pt] (300,50) -- (300,400);
\node[left, text=black] at (50,50) {$0$};
\node[left, text=black] at (50,400) {$1$};
\end{tikzpicture}
\begin{tikzpicture}[x=1pt,y=1pt,scale=0.2]
\draw[blue, line width=2pt] (50,50) -- (250,50) -- (250,400) -- (450,400);
\draw[black,->, line width=1pt] (50,50) -- (450,50);
\draw[black,->, line width=1pt] (50,50) -- (50,450);
\node[right, text=black] at (450,50) {$t_i$};
\node[above, text=black] at (50,450) {$x_i$};
\node[above] at (250, 500) {$n \rightarrow \infty$ uniform case};
\node[below, text=black] at (250,50) {$\frac{n-1}{2n}, \frac{n+1}{2n}\to \frac12$};
\node[left, text=black] at (50,50) {$0$};
\node[left, text=black] at (50,400) {$1$};
\end{tikzpicture}
\caption{BNE of Games with $n$ Players and Uniform Type Distributions} \label{fig:bnes}
\end{figure}

\eqref{eq:ubne} follows from Proposition~\ref{prop:bgbr} by solving the system of $n$ equations in the form of \eqref{eq:bnebr}. More details of the derivation can be found in the appendix. With $w_0 = 0$ and $w_i = \dfrac{1}{n}$ for every $i$, \eqref{eq:ubne} can be further simplified and plotted for various values of $n$ in Figure~\ref{fig:bnes}. In this example, $1 - {(b_i-a_i)} = 1 - w_{i}$ fraction of the types of player $i$ will report one of two extreme types $0$ or $1$, and the remaining $w_{i}$ fraction will exaggerate by a factor of $\dfrac{1}{w_{i}}$.

\section{Extensions of the Affine Mechanism Game}
\subsection{Extension 1: Negative Inner Product Loss}
\label{sec:cosine}
Up to now player $i$'s loss function~\eqref{eq:ell} is based on the Euclidean distance between its target point $\t_i \in \R^d$ and the mechanism $\hat\x$.
In some applications, the following \emph{negative inner product loss} can be more appropriate:
\begin{equation}
\label{eq:ell2}
\ell_i(\x) := - \t_i^\top \hat\x.
\end{equation}
That is, player $i$ has a small loss if the mechanism $\hat\x$ has a large projection onto the direction of target direction $\t_i$.

AMG with this negative inner product loss~\eqref{eq:ell2} has an even stronger guarantee: the game has a \emph{Weakly Dominant Strategy Equilibrium}. 

\begin{df} [Weakly Dominant Strategy Equilibrium (wDSE)] \label{df:sdse} 
An action profile $\x=(\x_1 \ldots \x_n) \in \X^{n}$ is a wDSE if for every player $i$,
$\ell_{i}\left(\x_{i}, \x'_{-i}\right) \leq \ell_{i}\left(\y, \x'_{-i}\right), \; \forall \y \in \X, \x'_{-i} \in \X^{n-1}$.
\end{df}

\begin{rmk}  \label{rmk:ea} 
The term weakly dominant strategy equilibrium is used in Chapter 4.5 of~\cite{tadelis2013game}, and it is also called dominant strategy equilibrium in Chapter 10.3 of~\cite{osborne1994course}, and dominant strategy solution in Chapter 1.3.1 of~\cite{roughgarden2010algorithmic}. As noted in~\cite{osborne1994course}, an action in a wDSE is not required to weakly dominate (with at least one strict inequality) all other actions for a player.
wDSE is also a weaker solution concept compared to (strictly) dominant strategy equilibrium, which requires strict inequality everywhere,
$\ell_{i}\left(\x_{i}, \x'_{-i}\right) < \ell_{i}\left(\y, \x'_{-i}\right), \; \forall \y \in \X, \y  \neq \x_{i}, \x'_{-i} \in \X^{n-1}$.
\end{rmk}

\begin{prop}[Existence of wDSE] \label{prop:wdse}
Under loss~\eqref{eq:ell2}, game $G$ has a wDSE $\x_i^*$ satisfying
$\x_i^* \in \argmax_{\x_i \in \X} \t_i^\top \x_i \;,\forall i\in[n]$.
\end{prop}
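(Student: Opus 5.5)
The plan is to exploit the fact that under loss~\eqref{eq:ell2} each player's loss is affine and additively separable in the players' actions. Expanding the mechanism~\eqref{eq:receiver}, we have
\[
\ell_i(\x_i, \x'_{-i}) = -\t_i^\top \hat\x = -w_i \t_i^\top \x_i \;-\; \t_i^\top\Big(w_0\x_0 + \sum_{j\neq i} w_j \x'_j\Big).
\]
The second term depends only on $\x'_{-i}$ (and the fixed bias). Thus for any fixed $\x'_{-i}$, comparing $\ell_i(\x_i,\x'_{-i})$ with $\ell_i(\y,\x'_{-i})$ reduces to comparing $-w_i\t_i^\top \x_i$ with $-w_i \t_i^\top \y$; the contribution of the other players cancels exactly.

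First I will establish that $\argmax_{\x_i\in\X}\t_i^\top\x_i$ is nonempty for each $i$. This holds because $\x_i\mapsto \t_i^\top \x_i$ is continuous (linear) and $\X$ is compact, by the standing assumption in Definition~\ref{df:game}, so the maximum is attained. (If $\t_i=\mathbf{0}$, every point of $\X$ is a maximizer, and the claim is trivial for that player.) I then pick any maximizer $\x_i^*$ for each $i$ and form the profile $\x^*=(\x_1^*,\ldots,\x_n^*)$.

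Next I verify Definition~\ref{df:sdse}. Fix $i$, an arbitrary $\y\in\X$, and an arbitrary $\x'_{-i}\in\X^{n-1}$. Using the expansion above, $\ell_i(\x_i^*,\x'_{-i})-\ell_i(\y,\x'_{-i}) = -w_i(\t_i^\top\x_i^* - \t_i^\top \y)$. Since $w_i>0$ and $\t_i^\top\x_i^*\ge \t_i^\top\y$ by the choice of $\x_i^*$, this difference is $\le 0$, which is the required inequality. The sign of $w_i$ is the only point needing care: positivity of $w_i$ for $i\in[n]$ is assumed in the setup, while $w_0$ may have any sign but only enters the cancelled term.

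There is no real obstacle here. The only thing to state explicitly is that the inequality holds uniformly over all $\x'_{-i}$, not merely at $\x^*_{-i}$; this is what distinguishes a wDSE from a Nash equilibrium, and it follows from the exact cancellation of the $\x'_{-i}$-dependent term. I may also remark that when $\t_i\neq \mathbf{0}$ the maximizer lies on the boundary of $\X$, since a nonzero linear function has no interior maximum on a convex body. This parallels the extreme-exaggeration phenomenon of Theorem~\ref{thm:bdy}, but it is not needed for the proof.
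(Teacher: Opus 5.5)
Your proposal is correct and follows essentially the same route as the paper: both expand $\ell_i = -w_i\t_i^\top\x_i + (\text{term independent of } \x_i)$ and conclude that player $i$'s best response to any $\x'_{-i}$ is a maximizer of $\t_i^\top\x_i$ over $\X$. Your version also spells out two points the paper leaves implicit: the argmax is nonempty by compactness, and $w_i>0$ is what keeps the direction of the inequality.
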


One significant benefit of this wDSE is that each player can compute their $\x_i^*$ on their own, without even the knowledge of other players' actions $\x_j, \forall j\neq i$.
Contrast this with the best-response dynamics in section~\ref{sec:potential}.
We remark that $\x_i^*$ may not be exactly along the direction of $\t_i$ since it depends on the domain $\X$.
Nonetheless, computing $\x_i^*$ is a convex optimization problem--maximizing a linear function over a convex set--and thus efficient.


\subsection{Extension 2: Finite Action Space} 
\label{sec:F}
So far, we have assumed that the players' action space $\X$ is a compact and convex (hence infinite unless singleton) subset of $\R^d$. 
In some applications, the players are restricted to picking their actions from a finite $\X$ instead.
For example, $\X$ may be the collection of news articles (represented by embedding vectors in $\R^d$) published by all professional news agencies within the past 24 hours, and each player may select a handful of such news articles to place on a social media user (the mechanism)'s timeline.
This motivates the extension to finite action space:

\begin{df} [AMG with finite action space] \label{df:dbig} 
Affine Mechanism Game with finite action space is an $n$-player general-sum game $F = \left(n, \left\{\mathcal{P}_{k_i}\left(\X\right)\right\}_{i=1}^{n}, \left\{\ell_i\right\}_{i=1}^{n}\right)$, where the action space of player $i$ is $\mathcal{P}_{k_i}\left(\X\right)$, and $\mathcal{P}_{k_i}\left(\X\right)$ is the set of all subsets containing $k_{i}$ elements from a finite $\X \subset \mathbb{R}^{d}$.
The loss function of player $i$
is given by $\ell_{i}\left(\x\right) = \left\|\hat{\x} - \t_{i}\right\|^{2}_{2}$ with
$\hat{\x} = w_{0} \x_{0} + \displaystyle\sum_{i=1}^{n} w_{i} \displaystyle\sum_{j=1}^{k_{i}} \x^{\left(j\right)}_{i}$, 
where $\x^{(j)}_i$ is the $j$th element in player $i$'s chosen subset.
\end{df}

In the original AMG (Definition~\ref{df:game}) where $\X$ is convex, allowing the players to choose multiple items would not affect the results since choosing multiple items is equivalent to choosing the mean of these items, which is still in $\X$. In the new game, the average of items in $\X$ may not be in $\X$.
However, the new 
game can be interpreted as each player picks one ``meta item'' $\x'_i$ instead of $k_i$ items, 
with 
$w'_{i} = w_{i} k_{i}$ and
$\x'_{i}= \dfrac{1}{k_{i}} \displaystyle\sum_{j=1}^{k_{i}} \x^{\left(j\right)}_{i}$. 

\begin{cor}
\label{cor:Fpotential}
$F$ is also a potential game, with a potential function
\begin{align}
\phi\left(\left\{\left\{\x^{\left(j\right)}_{i}\right\}_{j=1}^{k_{i}}\right\}_{i=1}^{n}\right):=
\left\|w_{0} \x_{0} + \displaystyle\sum_{i=1}^{n} w'_{i} \x'_{i}\right\|_{2}^{2} - 2 \displaystyle\sum_{i=1}^{n} w'_{i} \t^\top_{i} \x'_{i}.
\end{align}
\end{cor}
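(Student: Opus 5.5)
The plan is to reduce $F$ to an AMG with reweighted players and then reuse Theorem~\ref{thm:ctp} essentially verbatim. For each player $i$, set $w'_i = w_i k_i$ and $\x'_i = \frac{1}{k_i}\sum_{j=1}^{k_i}\x^{(j)}_i$. Then
\[
\sum_{j=1}^{k_i} w_i \x^{(j)}_i = w'_i \x'_i ,
\]
so the mechanism output is $\hat\x = w_0\x_0 + \sum_{i=1}^n w'_i\x'_i$. Hence each loss is $\ell_i = \|w_0\x_0 + \sum_{i'} w'_{i'}\x'_{i'} - \t_i\|_2^2$, which depends on player $i$'s chosen subset only through $\x'_i$. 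Nothing in the algebraic identity behind Theorem~\ref{thm:ctp} uses convexity or compactness of $\X$, so it should carry over unchanged to the discrete action sets $\mathcal{P}_{k_i}(\X)$.

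The key step is to verify the exact potential property directly for a unilateral deviation. Let player $i$ switch from a subset $S$ to a subset $S'$, giving meta items $\x'_i$ and $\y'_i$, and write $\mathbf{r} := w_0\x_0 + \sum_{i'\neq i} w'_{i'}\x'_{i'}$. Expanding the squares gives
\[
\ell_i(S,\cdot) - \ell_i(S',\cdot) = \|\mathbf{r}+w'_i\x'_i\|^2 - \|\mathbf{r}+w'_i\y'_i\|^2 - 2w'_i\t_i^\top(\x'_i - \y'_i),
\]
because the $\|\t_i\|^2$ terms cancel. Doing the same for $\phi$, the first term of $\phi$ contributes $\|\mathbf{r}+w'_i\x'_i\|^2 - \|\mathbf{r}+w'_i\y'_i\|^2$. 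In the linear term, only the $i$-th summand changes, contributing $-2w'_i\t_i^\top(\x'_i-\y'_i)$. So the two differences coincide and $\phi$ is an exact potential. One point to note is that $\phi$ is a function of the subsets themselves: two subsets with the same mean give the same value, which is harmless.

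I do not expect a real obstacle here, only bookkeeping. The part needing care is confirming that the weights $w'_i = w_ik_i$ appear correctly in the linear term, that is, $-2w'_i\t_i^\top\x'_i = -2w_i\t_i^\top\sum_j \x^{(j)}_i$. This holds by the definition of $\x'_i$.

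Finally, I would remark that since each $\mathcal{P}_{k_i}(\X)$ is finite, $\phi$ attains its minimum over the finite joint action set. That minimizer is therefore a pure NE of $F$, by the standard finite-potential-game argument of \cite{monderer1996potential}. Note that Proposition~\ref{prop:cvx}'s equivalence between NEs and minimizers is not claimed here, because without convexity only one direction holds.
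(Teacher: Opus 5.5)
Your proposal is correct and follows the paper's route. The paper's proof just notes that the losses are the same as in Theorem~\ref{thm:ctp} (with $\hat\x$ rewritten via $w'_i\x'_i$), so the domain-independent potential identity carries over to the new domain; you carry out that computation explicitly, and your caveat that only the direction ``minimizer of $\phi$ $\Rightarrow$ pure NE'' survives in the finite setting is accurate.
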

Therefore, the new game $F$ also has at least one pure NE.
Since $\X$ is finite and the number of players is finite, $F$ is a finite game.
Although a pure NE can be similarly found through best-response dynamics, unlike the continuous case, each iteration of best-response dynamics can be costly to compute for large values of $k_{i}$ since it involves solving a variant of the subset sum problem. 
Since $F$ is finite, it clearly cannot have infinite pure NEs (contrast with corollary~\ref{cor:cvx}), but it can still have many pure NEs.

\section{Conclusion and Future Work}
We introduced the Affine Mechanism Game and its Bayesian variant, and characterized the structures of pure Nash and Bayesian Nash equilibria, including closed-form solutions under simplifying assumptions. In particular, we showed the players' extreme-exaggeration equilibrium behavior in the complete-information case and quantified the exaggeration ratio as a function of the players' influence (weight). 

The presented work has several limitations that can be enhanced by future work, including
broadening the mechanism to non-affine functions (such as trimmed mean) and characterizing the full set of NEs,
and allowing uncertainty in the players' beliefs about the mechanism (such as the weights $w_i$'s).


\section{Technical Appendices and Supplementary Material}

\subsection{Proof of Theorem \ref{thm:ctp}}

\begin{proof}  \label{proof:cptpf} 
We need to show if any player $i$ deviates from action $\x_i$ to any action $\y\in\X$,
we have
$\ell_i(\x_i,\x_{-i}) - \ell_i(\y,\x_{-i})=
\phi(\x_i,\x_{-i}) - \phi(\y,\x_{-i}).$
To this end, define an auxiliary variable $\z$ that does not depend on $\x_i$ or $\y$:
$$\z := w_0 \x_0 + \sum_{j\neq i} w_j \x_j.$$
Then 
$\ell_i(\x_i,\x_{-i})=\|w_i\x_i + (\z-\t_i)\|^2 = \|w_i\x_i\|^2 + 2w_i\x_i^\top (\z-\t_i) + \|\z-\t_i\|^2$,
and
\begin{align*}
&\ell_i(\x_i,\x_{-i})-\ell_i(\y,\x_{-i})=\|w_i\x_i\|^2 + 2w_i\x_i^\top(\z-\t_i) - \|w_i\y\|^2 - 2w_i\y^\top(\z-\t_i).
\end{align*}
On the other hand,
\begin{align*}
\phi(\x_i,\x_{-i})&=\|w_i\x_i + \z\|^2 - 2 w_i \t_i^\top \x_i - 2 \sum_{j\neq i} w_j \t_j^\top \x_j \\
&=\|w_i\x_i\|^2 + 2 w_i (\z-\t_i)^\top \x_i + \|\z\|^2 - 2 \sum_{j\neq i} w_j \t_j^\top \x_j.
\end{align*}
The last two terms do not depend on $\x_i$.  Hence
\begin{align*}
&\phi(\x_i,\x_{-i})-\phi(\y,\x_{-i}) \\
&= \|w_i\x_i\|^2 + 2 w_i (\z-\t_i)^\top \x_i - \|w_i\y\|^2 - 2 w_i (\z-\t_i)^\top \y \\
&= \ell_i(\x_i,\x_{-i})-\ell_i(\y,\x_{-i}). 
\end{align*}
\end{proof}

\subsection{Proof of Proposition \ref{prop:cvx}}

\begin{proof} 
Our potential function $\phi$ is the sum of two convex functions in $\x$ and hence convex.
Using utilities $u_i = -\ell_i$, the game has concave potential $-\phi$, and since the domain $\X^n$ is convex and $\phi$ is smooth and convex, by Theorem 1 of~\cite{neyman1997correlated} and its corollary, the set of pure Nash equilibria coincides with the minima of the potential function on $\X^n$.
\end{proof}

\subsection{Proof of Corollary \ref{cor:cvx}}

\begin{proof}  
Since the set of minima of the convex potential function on a compact domain $\X^{n}$ is non-empty, there is at least one pure Nash equilibrium. Since the set of minima of the convex potential function on a convex domain $\X^{n}$ is convex (Corollary in~\cite{neyman1997correlated} below Theorem 1), any convex combination of two distinct pure Nash equilibria is another pure Nash equilibrium. 
\end{proof}

\subsection{Proof of Theorem \ref{thm:bdy}}

\begin{proof}  \label{proof:bdypf} 
For any $i \in [n]$, 
\begin{align}
\nabla_{\x_i} \phi = 2 w_i \sum_{k=0}^n w_k \x_k - 2 w_i \t_i.
\end{align}
Suppose $\exists i, j\in [n], i\neq j: \x_i, \x_j \in \mathrm{int\;} \X$. Then 
\begin{align}
& \nabla_{\x_i} \phi = \boldsymbol{0} = \nabla_{\x_j} \phi \Rightarrow  \t_i =  \sum_{k=0}^n w_k \x_k = \t_j,
\end{align}
a contradiction.
The equation
$\hat{\mathbf{x}}= \mathbf{t}_{i^\star}$
follows from $\nabla_{\x_{i^*}} \phi = \boldsymbol{0}$ and the definition of the mechanism~\eqref{eq:receiver}.
\end{proof}


\subsection{Proof of Lemma \ref{lem:odne}}
First we prove the following lemma:
\begin{lem} [Sorted Pure Nash Equilibrium] \label{lem:sort}
For a joint action $x$, if there exist $i<j$ such that $x_j < x_i$ and $t_i < t_j$, then $x \notin \mathrm{pure NE}(G)$.
\end{lem}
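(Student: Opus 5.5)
We argue by contradiction. Using Proposition~\ref{prop:cvx}, a pure NE is a minimizer of $\phi$ over $\X^n$, so it suffices to exhibit a feasible perturbation of $x$ that strictly decreases $\phi$. Given $i<j$ with $x_j<x_i$ and $t_i<t_j$, we perform a \emph{weighted swap} that keeps the aggregate $\hat x$ fixed. We move $x_i$ down by $\delta/w_i$ and $x_j$ up by $\delta/w_j$ for a small $\delta>0$. Then $w_ix_i+w_jx_j$ is unchanged, so the quadratic term $\|\sum_k w_kx_k\|^2$ of $\phi$ in~\eqref{eq:potential} is unchanged.

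Next we verify feasibility. Because $x_j<x_i$ and both lie in $[L,U]$, we have $x_i>L$ and $x_j<U$. Hence for all sufficiently small $\delta>0$ we get $x_i-\delta/w_i\ge L$ and $x_j+\delta/w_j\le U$, and the perturbed profile stays in $\X^n$.

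Now we compute the change in the linear part $-2\sum_k w_kt_kx_k$. Player $i$ contributes $-2w_it_i(-\delta/w_i)=2t_i\delta$, and player $j$ contributes $-2w_jt_j(\delta/w_j)=-2t_j\delta$. The total change in $\phi$ is therefore $2\delta(t_i-t_j)<0$, because $t_i<t_j$. So $x$ does not minimize $\phi$ over $\X^n$, and by Proposition~\ref{prop:cvx} it is not a pure NE.

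An alternative route works directly with best responses. At a NE, player $i$'s action satisfies a KKT condition: if $x_i>L$ then $\hat x\ge t_i$, and if $x_j<U$ then $\hat x\le t_j$. With $x_i>L$ and $x_j<U$ this gives only $t_i\le\hat x\le t_j$, which is consistent with the hypotheses and yields no contradiction. Because of this, we commit to the joint-deviation (potential) argument above. It relies on Proposition~\ref{prop:cvx} to get from ``not a minimizer'' to ``not a NE'', since a joint move by two players is not a unilateral deviation. That is the one conceptual step to state carefully; the rest is the short computation.
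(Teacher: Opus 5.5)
Your proof is correct, but it takes a different route from the paper. The weighted swap $x_i \mapsto x_i - \delta/w_i$, $x_j \mapsto x_j + \delta/w_j$ works as follows:
\begin{itemize}
\item It is feasible for small $\delta>0$, because $x_j<x_i$ forces $x_i>L$ and $x_j<U$.
\item It leaves $\hat x$, and hence the quadratic term of $\phi$, unchanged.
\item It changes $\phi$ by $2\delta(t_i-t_j)<0$.
\end{itemize}
You also correctly identify the load-bearing step. You need the direction ``pure NE $\Rightarrow$ minimizer of $\phi$'' of Proposition~\ref{prop:cvx}, because your two-player move is not a unilateral deviation. That direction relies on the convexity and smoothness of $\phi$ over the product domain $\X^n$, and it fails for potential games in general.

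The paper instead argues directly from the definition of NE using unilateral deviations. Since $t_i<t_j$, either $\hat x<t_j$ or $\hat x>t_i$.
\begin{itemize}
\item If $\hat x<t_j$, player $j$ profitably nudges $x_j$ up. There is room because $x_j<x_i\le U$.
\item If $\hat x>t_i$, player $i$ profitably nudges $x_i$ down. There is room because $x_i>x_j\ge L$.
\end{itemize}
The paper's route is self-contained and names the deviating player. Yours is a tidy exchange argument that avoids case analysis on $\hat x$, but it is only as strong as Proposition~\ref{prop:cvx}.

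Your final paragraph, however, contains a sign error and should be corrected or removed. Since $\partial \ell_i/\partial x_i = 2w_i(\hat x - t_i)$, the correct conditions at a NE are the reverse of what you wrote:
\begin{itemize}
\item If $x_i>L$, then $\hat x\le t_i$; otherwise player $i$ lowers $x_i$ and pulls $\hat x$ toward $t_i$.
\item If $x_j<U$, then $\hat x\ge t_j$.
\end{itemize}
Together these give $t_j\le\hat x\le t_i$, which does contradict $t_i<t_j$. So the direct best-response route works, and it is exactly the paper's proof. Your stated reason for abandoning it rests on the reversed inequalities.
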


\begin{proof}
By the assumption in the lemma, at least one of the following must be true:

\textbf{Case 1:} $\hat x < t_j$.
Consider the unilateral deviation of player $j$ to $x_j + \epsilon$, for $\epsilon > 0$. Let $\hat{x}'$ be the new mechanism, where $\hat{x}' = \hat{x} + w_j\epsilon$. Choose $\epsilon$ small enough such that $\hat{x} < \hat{x}' < t_j$. This will always decrease the loss of player $j$.

\textbf{Case 2:} $\hat x > t_i$.
Similarly, player $i$ unilaterally deviate to $x_i - \epsilon$, for $\epsilon > 0$. The new mechanism is $\hat{x}' = \hat{x} - w_i\epsilon$. Choose $\epsilon$ small enough such that $\hat{x} > \hat{x}' > t_i$. This will always decrease the loss of player $i$.

In either case, some player has a unilateral deviation that strictly decreases their loss, so $x\notin \mathrm{pure NE}(G)$.
\end{proof}

Now we actually prove Lemma \ref{lem:odne}
\begin{proof}
Assume that the players are sorted in ascending order by their targets. By the Lemma \ref{lem:sort}, we know that in any pure NE, the players' actions must be sorted by their targets. This shows that $\exists \underline{i}, \overline{i} \in \{0, 1, \dots, n+1\}$ s.t. $x_i = L$ for $ i \le \underline{i}$, $x_i = U$ for $i \ge \overline{i}$, because pure NE strategies of the players are sorted by their targets, and the action space is not unbounded. In other words, there could be a set of players playing $L$ in the start of the queue of players sorted by their targets, and a potential set of players playing $U$ in the end of that queue. Note that when $\underline{i} < 1$, we define that there will be no player playing $L$, and when $\overline{i} > n$, there will be no player playing $U$.

In addition, by Theorem \ref{thm:bdy}, $\forall i$ s.t. $\underline{i} < i < \overline{i}$ must satisfy $t_i = t^\star$ for some $t^\star \in [L, U]$. Therefore, $x_i = L \;\; \forall i \text{ s.t. } t_i < t^\star$, and $x_i = U \;\; \forall i \text{ s.t. } t_i > t^\star$.

By Theorem \ref{thm:bdy}, we have $\hat{x} = t^\star$, so there is
\[
\hat{x} = \displaystyle\sum_{i=0}^n w_{i} x_{i} = t^\star
\implies
\displaystyle\sum_{i : t_{i} = t^\star} w_{i} x_{i} = t^\star - w_{0} x_{0} - \displaystyle\sum_{i : t_{i} < t^\star} w_{i} L - \displaystyle\sum_{i : t_{i} > t^\star} w_{i} U
\]

\end{proof}

\subsection{Proof of Proposition \ref{prop:dec}}
\begin{proof}
The loss for any player $i \in [n]$ in $G$ can be separated this way:
\begin{align*}
\ell_i(\x) = \|w_0 \x_0 + \sum_{j=1}^n w_j \x_j - \t_i\|_2^2 = \sum_{k=1}^d (w_{0} \x_{0k} + \sum_{j=1}^n w_j \x_{jk} - \t_{ik})^2 = \sum_{k=1}^d \ell_{ik}(\x_k).
\end{align*}
We argue that $\x$ is a pure Nash equilibrium if and only if, for each coordinate $k \in [d]$, the profile $(\x_{1k}, \x_{2k}, \dots, \x_{nk})$ is a pure Nash equilibrium of $G_k$.

Assuming that the action profile $\x$ is a pure NE. Then $\forall i\in [n], \; \forall \x'_i \neq \x_i, \; \ell_i(\x_i, \x_{-i}) \le \ell_i(\x'_i, \x_{-i})$. Suppose that $\exists k \in [d], \; i \in [n], \; \x'_{ik} \ne \x_{ik} \; \text{s.t.} \; \ell_{ik}(\x_{ik}, \x_{-ik}) > \ell_{ik}(\x'_{ik}, \x_{-ik})$. Define a full-dimensional deviation $\x'$ by changing only the $k$th coordinate:
\[
x_i' = (x_{i1},\ldots,x_{i,k-1},x'_{ik},x_{i,k+1},\ldots,x_{id}).
\]
then the pure NE assumption is violated:
\[
\ell_i(\x_i, \x_{-i}) = \sum_{k=1}^d \ell_{ik}(\x_{ik}, \x_{-ik})
> \ell_{i1}(\x_1) + \dots + \ell_{ik}(\x'_{ik}, \x_{-ik}) + \dots + \ell_{id}(\x_d)
= \ell_i(\x'_i, \x_{-i}).
\]
Hence, if $\x$ is a pure NE, $\x_k$ is a pure NE for $G_k$.

Conversely, assume that for all $k \in [d]$, $\x_k = (\x_{ik}, \x_{-ik})$ is a pure NE of $G_k$. Then $\forall i, k, \forall \x'_{ik} \in \X_k$, $\ell_{ik}(\x_{ik}, \x_{-ik}) \le \ell_{ik}(\x'_{ik}, \x_{-ik})$. Summing the LHS and RHS of the inequalities gives:
\[
\ell_i(\x_i, \x_{-i}) = \sum_{k=1}^d \ell_{ik}(\x_{ik}, \x_{-ik})
\le \sum_{k=1}^d \ell_{ik}(\x'_{ik}, \x_{-ik})
= \ell_i(\x'_i, \x_{-i}) \quad \forall \x_i' \ne \x_i.
\]
This is exactly the definition of pure NE for $G$. Hence, if $\x_k$ is a pure NE for $G_k$, $\x$, which is created by stacking up all $\x_k$, is a pure NE for $G$.

Hence we have $\mathrm{pure NE}(G) = \prod_{k=1}^d \mathrm{pure NE}(G_k)$.

\end{proof}

\subsection{Proof of Proposition \ref{prop:bgbr}}
\begin{proof}
First, the expected loss of player $i$ is separable:
\begin{align*}
\mathbb{E}_{T_{-i}}\left[\ell_i(\x_i, \pi_{-i}; \t_i)\right]
&= \mathbb{E}_{T_{-i}}\!\left[\sum_{k=1}^d\left(w_0x_{0k} + w_ix_{ik}+\sum_{j\ne i} w_j\pi_{jk}(\t_{j})-t_{ik}\right)^2\Bigm| \t_i\right] \\
&= \sum_{k=1}^d\mathbb{E}_{T_{-i}}\!\left[\left(w_0x_{0k} + w_ix_{ik}+\sum_{j\ne i} w_j\pi_{jk}(\t_{j})-t_{ik}\right)^2\Bigm| \t_i\right].
\end{align*}

Fixing any $\pi_{-i}(\t_{-i})$, the best response of player $i$ given target $\t_i$ is:
\begin{align}
\x_i^* &= \arg\min_{\x_i \in \X}\sum_{k=1}^d\mathbb{E}_{T_{-i}}\left[\left(w_0x_{0k} + w_ix_{ik}+\sum_{j\ne i} w_j\pi_{jk}(\t_{j})-t_{ik}\right)^2\Bigm| \t_i\right]\label{eq:bbr}
\end{align}
When minimizing $\mathbb{E}_{T_{-i}}\left[\ell_i(\x_i, \pi_{-i}; \t_i)\right]$, term $k$ of \eqref{eq:bbr} depends on player $i$'s action only through $x_{ik}$, i.e. any change on $x_{ik'}, \;\; k'\ne k$ will not change term $k$. In addition, the domain is a product-space $\X = \prod_{k=1}^d \X_k$, so the choice of $x_{ik}$ is unaffected by the choice in any other coordinate. Thus, minimizing $\mathbb{E}_{T_{-i}}\left[\ell_i(\x_i, \pi_{-i}; \t_i)\right]$ is equivalent to minimizing each of its $d$ terms.

The distribution of targets $T_1, \dots, T_n$ are independent across players, so the expected action for each opponent is independent of $\t_i$, which means $\mathbb{E}_{T_j}[\pi_{jk}(\t_j)\mid \t_i] = \mathbb{E}_{T_j}[\pi_{jk}(\t_j)] \;\; \forall j \ne i$, given $i$ and $k$. We will use this to evaluate the derivative below.

Since term $k$ is strictly convex in $x_{ik}$, its minimum on the domain $[L_k, U_k]$ occurs at $x_{ik}=L_k$ if and only if it is monotonically increasing on $[L_k, U_k]$, which happens when
\begin{align*}
&\frac{d}{dx_{ik}}\mathbb{E}_{T_{-i}}\left[\big(w_0x_{0k} + w_ix_{ik}+\sum_{j \ne i}w_j\pi_{jk}(\t_j) - t_{ik}\big)^2\right]\big|_{x_{ik}=L_k} \ge 0 \\
&\implies t_{ik} \le w_0x_{0k} + w_iL_k + \sum_{j \ne i}w_j\mathbb{E}_{T_j}\big[\pi_{jk}(\t_j)\big].
\end{align*}

Similarly, the minimum on $[L_k, U_k]$ occurs at $x_{ik}=U_k$ if and only if term $k$ is monotonically decreasing on $[L_k, U_k]$, which happens when
\begin{align*}
&\frac{d}{dx_{ik}}\mathbb{E}_{T_{-i}}\left[\big(w_0x_{0k} + w_ix_{ik}+\sum_{j \ne i}w_j\pi_{jk}(\t_j) - t_{ik}\big)^2\right]\big|_{x_{ik}=U_k} \le 0 \\
&\implies t_{ik} \ge w_0x_{0k} + w_iU_k + \sum_{j \ne i}w_j\mathbb{E}_{T_j}\big[\pi_{jk}(\t_j)\big].
\end{align*}

The best response is an interior point of $[L_k, U_k]$ when neither condition above holds, i.e. when term $k$ of \eqref{eq:bbr} is non-monotonic on $[L_k, U_k]$. It happens when its unconstrained global minimizer $x_{ik}^*$ lies strictly within $(L_k, U_k)$,
i.e. $L_k < x^*_{ik} < U_k$.

The unconstrained best response $x_{ik}^*$ is defined as
\[
x_{ik}^* = \arg\min_{x_{ik} \in \mathbb{R}}\mathbb{E}_{T_{-i}}\left[\big(w_0x_{0k} + w_ix_{ik}+\sum_{j \ne i}w_j\pi_{jk}(\t_j) - t_{ik}\big)^2\right].
\]

To find $x_{ik}^*$, we set the derivative of that expectation to 0, and we have:
\begin{align*}
&\frac{d}{dx_{ik}}\mathbb{E}_{T_{-i}}\left[\big(w_0x_{0k} + w_ix_{ik}+\sum_{j \ne i}w_j\pi_{jk}(\t_j) - t_{ik}\big)^2\right]\big|_{x_{ik}=x^*_{ik}} = 0 \\
&\implies x_{ik}^* = \frac{1}{w_i}t_{ik} - \frac{1}{w_i}\left(w_0x_{0k} + \sum_{j \ne i}w_j\mathbb{E}_{T_j}\big[\pi_{jk}(\t_j)\big]\right).
\end{align*}

Therefore, term $k$ is minimized at
\[
\pi_{ik}^*(\t_{i}) = x_{ik}^* = 
\begin{cases}
L_k, & t_{ik} \le a_{ik},\\
\frac{1}{w_i}t_{ik} + c_{ik}, & a_{ik} < t_{ik} < b_{ik},\\
U_k, & t_{ik} \ge b_{ik},
\end{cases}
\]
where $c_{ik} = - \dfrac{1}{w_{i}} \left(w_0 x_{0k} +  \displaystyle\sum_{j \neq i} w_{j} \mathbb{E}_{T_{j}}\left[\pi_{jk}\left(t_{j}\right)\right]\right)$, $a_{ik} = w_{i} \left(L_k - c_{ik}\right)$, and $b_{ik} = a_{ik} + w_{i} \left(U_k - L_k\right) = w_{i} \left(U_k - c_{ik}\right)$.
\end{proof}

\subsection{Proof of Example \ref{eg:1dbne}}
\begin{proof}
By Proposition~\ref{prop:bgbr}, the BNE set consists of ramp profiles parameterized by $(c_1, \ldots, c_n)$ that satisfy the consistency system on our only dimension. We will solve for $c_i \; \forall i \in [n]$.

First, since $0 \le \mathbb{E}_{T_{j}}\left[\pi_{j}\left(t_{j}\right)\right] \le 1 \; \forall j, \pi_j$, for any BNE strategy profile $\{\pi^*_j(t_j)\}_{j=1}^n$, we have the following properties for $a_i, b_i$ of player $i$:
\begin{align*}
a_i &= \displaystyle\sum_{j \neq i} w_{j} \mathbb{E}_{T_{j}}\left[\pi^*_{j}\left(t_{j}\right)\right] \ge \displaystyle\sum_{j \neq i} w_{j} \cdot 0 = 0 \\
b_i &= w_i + \displaystyle\sum_{j \neq i} w_{j} \mathbb{E}_{T_{j}}\left[\pi^*_{j}\left(t_{j}\right)\right] \le w_i + \displaystyle\sum_{j \neq i} w_{j} = 1
\end{align*}
This means for strategy $\pi^*_i(t_i)$ in any BNE, it must satisfy $0 \le a_i < b_i \le 1$. We are not limiting the action space of the game, because this constraint is directly derived from existing constraints, including $\X = [0, 1]$ and the characterizations of $a_i$ and $b_i$.

Therefore we can compute $\mathbb{E}_{T_j}\big[\pi^*_j(T_j)\big]$ under the uniform prior by assuming that $0 \le a_i < b_i \le 1$ holds for this BNE strategy:
\[
\mathbb{E}_{T_j}\big[\pi^*_j(t_j)\big] = 0 \cdot a_j + 1 \cdot (1 - b_j) + \int_{a_j}^{a_j + w_j} \!\!\left(\frac{1}{w_j}t_j - \frac{a_j}{w_j}\right) dt_j = 1 + w_jc_j - \frac{w_j}{2}.
\]
Substituting into the consistency system $c_{i} = - \dfrac{1}{w_{i}}\displaystyle\sum_{j \neq i} w_{j} \mathbb{E}_{T_{j}}\left[\pi^*_{j}\left(t_{j}\right)\right]$ yields the linear system
\[
w_ic_i + \sum_{j \ne i} w_j^2c_j = -\sum_{j \ne i}\big(w_j - \frac{w_j^2}{2}\big), \quad \forall i \in [n].
\]
Let the matrix representing the LHS weights be $M$, with
$M_{ij} = 
\begin{cases}
w_i & \text{if}\; j=i \\
w_j^2 & \text{if}\; j \ne i
\end{cases}$
subtracting row $1$ from all other rows, and then applying $ \text{row}\; 1 - \frac{w_i^2}{w_i - w_i^2} \cdot \text{row}\; i, \; \forall i\in \{2, \dots, n\}$ makes a lower triangular matrix with non-zero diagonal values. This means that $M$ is full rank and there is a unique solution to our system of equations. $c_i = -\frac{1-w_i}{2w_i} \; \forall i \in [n]$ is the unique solution here, with verification omitted.

If we plug $\{c_i\}_{i=1}^n$ back into the ramp function, the unique BNE strategy has $a_i = \frac{1-w_i}{2}, b_i = \frac{1+w_i}{2}$, and $c_i = - \frac{1-w_i}{2w_i}$, which matches our claim in Example~\ref{eg:1dbne}.
\end{proof}

\subsection{Proof of Proposition \ref{prop:wdse}}

\begin{proof}
Given an arbitrary joint action from other players $\x_{-i}$, player $i$'s best response is
\begin{align*}
BR(\x_{-i}) &:= \argmin_{\x_i \in \X} \ell_i(\x_i, \x_{-i})  = \argmin_{\x_i \in \X} - \t_i^\top ( w_0 \x_0 + \sum_{j=1}^n w_j \x_j )\\
&= \argmin_{\x_i \in \X} - w_i \t_i^\top \x_i + \mathrm{const} = \x_i^*.
\end{align*}
The best response is independent of $\x_{-i}$.
\end{proof}

\subsection{Proof of Corollary \ref{cor:Fpotential}}

\begin{proof}
The proof is identical to the one for Theorem \ref{thm:ctp}, since the loss functions are identical, only on a different domain.
\end{proof}

\vspace{3em}


\begingroup
\let\clearpage\relax
\let\newpage\relax
\bibliography{big}
\endgroup

\end{document}